\definecolor{QCPbeige}{HTML}{E7D5AB}     
\definecolor{QCPblue}{HTML}{479093}      
\definecolor{QCPdarkblue}{HTML}{2C6071}  
\definecolor{QCPgray}{HTML}{333743}      
\definecolor{QCPlightblue}{HTML}{A6CBB9} 
\definecolor{QCPred}{HTML}{E08B8A}       
\pgfplotsset{compat = newest}
\newcommand{\gettikzxy}[3]{%
    \tikz@scan@one@point\pgfutil@firstofone#1\relax
    \edef#2{\the\pgf@x}%
    \edef#3{\the\pgf@y}%
}
\apptocmd{\sloppy}{\hbadness 10000\relax}{}{}
\declaretheorem[style=plain]{theorem}
\declaretheorem[style=plain,sibling=theorem]{lemma}
\declaretheorem[style=plain,sibling=theorem]{proposition}
\declaretheorem[numbered=no,name=Assumption]{assumption}
\DeclareMathOperator{\ancilla}{a}     
\DeclareMathOperator{\PSD}{PSD}       
\DeclareMathOperator{\supp}{supp}     
\DeclareMathOperator{\spn}{span}      
\newcommand*{\N}{\mathbb{N}}          
\newcommand*{\R}{\mathbb{R}}          
\newcommand*{\C}{\mathbb{C}}          
\newcommand*{\hil}{\mathcal{H}}       
\newcommand*{\defcolon}{\,:\,}        
\newcommand*{\transpose}{T}           
\newcommand*{\bit}{\mathbbmss{b}}     
\newcommand*{\qubit}{\mathbbmss{q}}   
\newcommand*{\solset}{S}              
\newcommand*{\solspace}{\mathcal{S}}  
\newcommand*{\one}{\mathds{1}}        
\newcommand*{\projection}{\mathds{P}} 
\newcommand*{\lo}{\mathcal{L}}        
\newcommand*{\mixeru}{U_{\text{M}}}   
\newcommand*{\scriptin}{\raisebox{0.15ex}{$\scriptscriptstyle \in$}}
\newcommand*{\deepscriptin}{\raisebox{0.1ex}{$\scriptscriptstyle \in$}}
\newcommand*{\mathhphantomdisplay}[2]{\mathmakebox[\widthof{$\displaystyle #1$}][l]{#2}}
\newenvironment{taggedsubequations}[1]{
    \addtocounter{equation}{-1}
    \edef\taggedsubeq@anchor{taggedsubequations.\detokenize{#1}}
    \renewcommand{\theHequation}{\taggedsubeq@anchor.parent}
    \begin{subequations}
        \def\@currentlabel{#1}
        
        \renewcommand{\theHequation}{\taggedsubeq@anchor.\arabic{equation}}
    }
{\end{subequations}}
\def\ps@IEEEtitlepagestyle{%
  \def\@oddfoot{\mycopyrightnotice}%
  \def\@evenfoot{}%
}
\def\mycopyrightnotice{%
  {\footnotesize
  \hfill 
  \parbox{\textwidth}{%
  © 2026 IEEE.  Personal use of this material is permitted.  Permission from IEEE must be obtained for all other uses, in any current or future media, including reprinting/republishing this material for advertising or promotional purposes, creating new collective works, for resale or redistribution to servers or lists, or reuse of any copyrighted component of this work in other works.}
  \hfill}
}
\begin{document}

\title{
    One for All: A Universal Quantum Conic Programming Framework for Hard-Constrained Combinatorial Optimisation Problems
    \thanks{This work was supported by the DFG through SFB 1227 (DQ-mat), Quantum Frontiers, the Quantum Valley Lower Saxony, the BMBF projects ATIQ and QuBRA.
        The authors used Claude Code for language editing and image generation throughout this article and for distributing the code via GitHub.
        All content was reviewed and edited by the authors, who take full responsibility for the final work.

        \noindent\textbf{Data and code availability statement.}
        The depicted data can be found at \url{https://github.com/Timo59/qce26_numerics}.
    }
}

\author{
    \IEEEauthorblockN{
        Lennart~Binkowski\IEEEauthorrefmark{1},
        Tobias~J.~Osborne\IEEEauthorrefmark{1},
        Marvin~Schwiering\IEEEauthorrefmark{1},
        Ren\'{e}~Schwonnek\IEEEauthorrefmark{1},
        Timo~Ziegler\IEEEauthorrefmark{1}
    }
    \IEEEauthorblockA{
        \IEEEauthorrefmark{1}Institut f\"{u}r Theoretische Physik\\
        Leibniz Universit\"{a}t Hannover, Hannover, Germany\\
        Email: $\{$lennart.binkowski, tobias.osborne, marvin.schwiering, rene.schwonnek, timo.ziegler$\}$@itp.uni-hannover.de
    }
}

\maketitle

\begin{abstract}
    We present a unified quantum-classical framework for addressing NP-complete constrained combinatorial optimisation problems, generalising the recently proposed Quantum Conic Programming (QCP) approach.
    Accordingly, it inherits many favourable properties of the original proposal such as preventing barren plateaus and NP-hard parameter optimisation.
    By collecting the entire classical feasibility structure in a single constraint, we enlarge QCP's scope to arbitrary hard-constrained problems.
    Yet, we prove that the additional restriction is mild enough to still allow for an efficient parameter optimisation via the formulation of a generalised eigenvalue problem (GEP) of adaptable dimension.
    Our rigorous proof further fills some apparent gaps in prior derivations of GEPs from parameter optimisation problems.
    We further detail a measurement protocol for formulating the classical parameter optimisation that does not require us to implement any problem-specific objective Hamiltonian or a quantum feasibility oracle.
    Lastly, we prove that, even under the influence of noise, QCP's parameterised ansatz class always captures the optimum attainable within its generated subcone.
    All of our results hold true for arbitrarily-constrained combinatorial optimisation problems.
\end{abstract}

\begin{IEEEkeywords}
    combinatorial optimisation,
    generalised eigenvalue problem,
    hard constraints,
    quantum conic programming
\end{IEEEkeywords}

\section{\label{section:Introduction}Introduction}

Optimisation is a ubiquitous aspect of various disciplines, arising naturally in economics, society, and science.
Decades of advancements in electrical engineering and algorithmic design have pushed the frontier of what is feasible to compute and solve, to the point where even minute improvements often lead to meaningful gains.
One promising candidate for significant advancement is the development of quantum computers, where recent experimental milestones in hardware manufacturing~\cite{PsiquantumTeam2025AManufacturablePlatformForPhotonicQuantumComputing} and error correction~\cite{GoogleQuantumAi2023SuppressingQuantumErrorsByScalingASurfaceCodeLogicalQubit} are expected to ultimately pave the way from the current noisy intermediate-scale quantum (NISQ)~\cite{Preskill2018QuantumComputingInTheNisqEraAndBeyond} devices to fault-tolerant, large-scale quantum processing units (QPUs).
Seminal works by Grover~\cite{Grover1997QuantumMechanicsHelpsInSearchingForANeedleInAHaystack}, Kadowaki and Nishimori~\cite{Kadowaki1998QuantumAnnealingInTheTransverseIsingModel}, and Farhi \textit{et al.}~\cite{Farhi2000QuantumComputationByAdiabaticEvolution} have sparked lasting interest in quantum search and quantum optimisation~\cite{Abbas2024ChallengesAndOpportunitiesInQuantumOptimization}, leading to the proposal of frameworks like quantum branch-and-bound~\cite{Montanaro2020QuantumSpeedupOfBranchAndBoundAlgorithms}, quantum dynamic programming~\cite{Ambainis2019QuantumSpeedupsForExponentialTimeDynamicProgrammingAlgorithms}, and quantum solvers for semidefinite programs~(SDPs)~\cite{VanApeldoorn2020QuantumSdpSolversBetterUpperAndLowerBounds}.
Variational quantum algorithms~(VQAs)~\cite{Cerezo2021VariationalQuantumAlgorithms}---the update of parameters $\theta_{j} \in \R$ controlling a parametrised quantum circuit~(PQC) $V_{L}(\theta_{L}) \cdots V_{1}(\theta_{1})$ by a classical optimisation algorithm---are often deemed to be the most suitable framework in the NISQ-era.
The parameters are optimised in order to yield a high-quality approximation of a given target state, typically the ground state of a target Hamiltonian.
However, significant hurdles remain for the widespread application to classical optimisation problems:
The design of PQCs that respect an optimisation problem's constraints is highly non-trivial and problem-specific~\cite{Hadfield2019FromTheQuantumApproximateOptimizationAlgorithmToAQuantumAlternatingOperatorAnsatz}, the parameter optimisation problem handed to the classical optimiser is often at least as hard as the original problem~\cite{Bittel2021TrainingVariationalQuantumAlgorithmsIsNpHard}, and VQAs often suffer from barren plateaus~\cite{Larocca2025BarrenPlateausInVariationalQuantumComputing}, where vanishing gradients hinder convergence to an even local optimum.

We propose a natural generalisation of Quantum Conic Programming~(QCP), as introduced in~\cite{Binkowski2025FromBarrenPlateausThroughFertileValleysConicExtensionsOfParameterisedQuantumCircuits}, by promoting QCP from a VQA subroutine for unconstrained optimisation problems to a general framework for constrained combinatorial optimisation problems.
QCP and similar frameworks, such as McClean \textit{et al.}'s method for determining excited states~\cite{Mcclean2017HybridQuantumClassicalHierarchyForMitigationOfDecoherenceAndDeterminationOfExcitedStates}, Huggins \textit{et al.}'s non-orthogonal variational quantum eigensolver~\cite{Huggins2020ANonOrthogonalVariationalQuantumEigensolver} and Bharti \textit{et al.}'s NISQ algorithm for SDPs~\cite{Bharti2022NoisyIntermediateScaleQuantumAlgorithmForSemidefiniteProgramming}, evade barren plateaus and avoid the overly complex parameter optimisation by utilising non-unitary parameterised gates.
These gates are implemented via linear combinations of unitaries (LCU)~\cite{Childs2012HamiltonianSimulationUsingLinearCombinationsOfUnitaryOperations} and the classical optimisation involves solving a generalised eigenvalue problem (GEP) or, more generally, an SDP\@.
Our generalisation exploits the fundamental design principles of QCP while extending its scope to constrained problems:
We enforce feasibility preservation of an LCU step by introducing an additional constraint to the associated parameter optimisation problem; yet, under minimal assumptions, we ensure that the resulting optimisation problem can again be reduced to a GEP without incurring additional sampling overhead for the QPU.
Moreover, we obtain a feasibility-preserving LCU operation even if the individual unitaries are not guaranteed to fully preserve feasibility.

After reviewing basics of constrained combinatorial optimisation problems, VQAs---in particular, the Quantum Approximate Optimisation Algorithm~\cite{Farhi2014AQuantumApproximateOptimizationAlgorithm} and the Quantum Alternating Operator Ansatz (QAOA)~\cite{Hadfield2019FromTheQuantumApproximateOptimizationAlgorithmToAQuantumAlternatingOperatorAnsatz}---and QCP in \autoref{section:Preliminaries}, we present our general framework in \autoref{section:Framework}.
We first discuss the incorporation of hard-constraints into QCP's parameter optimisation problem and then formulate the optimisation task via sample matrices.
For the latter, we provide a specialised measurement protocol that utilises the underlying classical structure.
Subsequently, we rigorously reduce the original doubly-constrained parameter optimisation problem to a GEP.
Our detailed proof of \autoref{theorem:GeneralisedEigenvalueProblem} not only extends, but also completes earlier derivations of the singly-constrained case in terms of mathematical rigour and proper handling of degenerate cases.
We further comment on LCU implementation and its success probability, once optimal parameters have been found, and we later discuss our framework's interplay with, and conceptual independence from, QAOA.
We conclude this section by demonstrating that even in the case of general quantum states (density operators), the LCU is readily implemented by a pure state on the ancilla register.
\autoref{section:NumericalExperiments} presents a proof of concept of hard-constrained QCP for small knapsack instances.
In \autoref{section:DiscussionAndConclusion}, we discuss the implications and open questions raised by our work.
We highlight our framework's interplay with QAOA, as well as its conceptual independence from it.

\section{\label{section:Preliminaries}Preliminaries}

\subsection{\label{subsection:ConstrainedCombinatorialOptimisationProblems}Constrained combinatorial optimisation problems}

Let $\bit \coloneqq \{0, 1\}$.
A \emph{constrained combinatorial optimisation problem} (CCOP) is formally defined as a triple $(n, c, \solset)$, where $n \in \N$ is the problem size, $c : \bit^{n} \rightarrow \R$ is the objective function, and $\solset \subseteq \bit^{n}$ is the feasible set.
Given a CCOP, the task is to minimise $c$ over all $\bm{b} \in \solset$.
Prominent examples of CCOPs are the Job-Shop Scheduling problem~\cite{Zhang2017ReviewOfJobShopSchedulingResearchAndItsNewPerspectivesUnderIndustry40}, the Knapsack problem~\cite{Kellerer2004KnapsackProblems}, the Maximum Independent Set problem~\cite{Tarjan1977FindingAMaximumIndependentSet}, and the Travelling Salesperson Problem~\cite{Applegate2009CertificationOfAnOptimalTspTourThrough85900Cities}, for many of which the corresponding decision problem is NP-complete~\cite{Karp1972ReducibilityAmongCombinatorialProblems}.
Throughout this article, we will only consider CCOPs that lie in NP as this implies that, for any given $\bm{b} \in \bit^{n}$, calculating $c(\bm{b})$ and verifying whether $\bm{b} \in \solset$ can be done in time polynomial in $n$.

Let us first quantise the problem:
We represent each classical bit by a qubit $\qubit \cong \C^{2}$, yielding an $n$-qubit register $\hil \coloneqq \qubit^{\otimes n}$.
The classical bit strings constitute the \emph{computational basis} (CB) $\{\ket{\bm{b}} \defcolon \bm{b} \in \bit^{n}\}$ of $\hil$.
The solution space $\solspace$ is the subspace of all superpositions of the feasible CB states, i.e., $\solspace \coloneqq \spn(\ket{\bm{b}} \defcolon \bm{b} \in \solset)$.
Let $C \in \lo(\hil)$ be the objective Hamiltonian, which is diagonal in the CB and whose action on the CB is defined by $C \ket{\bm{b}} \coloneqq c(\bm{b}) \ket{\bm{b}}$.
Then, the minimisation readily translates to finding an eigenstate of $C$ with minimal eigenvalue within the solution space $\solspace$, i.e., to finding a ground state of $C\vert_{\solspace}$.
By the Rayleigh-Ritz inequality, the ground state search, in turn, is equivalent to finding a minimiser of $\braket{\psi | C | \psi}$ among all feasible states $\ket{\psi} \in \solspace$.

VQAs are hybrid heuristics to approximate such a ground state on a quantum computer.
In a nutshell, a VQA introduces a parameterised manifold of attainable states $\ket{\bm{\theta}} = V\!(\bm{\theta}) \ket{\iota}$ by specifying an initial state $\ket{\iota}$ as well as a unitary \emph{parameterised quantum circuit} (PQC) $V\!(\bm{\theta}) \coloneqq V_{L}(\theta_{L}) \cdots V_{2}(\theta_{2}) V_{1}(\theta_{1})$.
The actual parameter optimisation is carried out classically with queries to the quantum computer for estimates of the energy expectation value $E(\bm{\theta}) \coloneqq \braket{\bm{\theta} | C | \bm{\theta}}$ via repeated state preparation and subsequent measurement.
Processing only quantum circuits of low depth, VQAs thereby mitigate the effects of quantum decoherence~\cite{Brandt1999QubitDevicesAndTheIssueOfQuantumDecoherence}.
Most of the optimisation methods used, although sophisticated and varied~\cite{Kuebler2020AnAdaptiveOptimizerForMeasurementFrugalVariationalAlgorithms}, are based on gradient optimisation strategies or (quasi-)Newton methods to perform line search on $E(\bm{\theta})$~\cite{BonetMonroig2023PerformanceComparisonOfOptimizationMethodsOnVariationalQuantumAlgorithms}.
For convex optimisation problems these methods converge to the global optimum.
However, $E(\bm{\theta})$ very rarely is a convex function, while the vast majority of optimisation landscapes are unfavourably non-convex.

\subsection{\label{subsection:QuantumAlternatingOperatorAnsatz}Quantum alternating operator ansatz}

The pioneering VQA for unconstrained combinatorial optimisation problems is Farhi~\textit{et~al.}'s QAOA.
Its PQC is a $p$-fold application of phase separator gates $\exp(-i \gamma C)$ and mixer gates $\exp(-i \beta B)$, where
\begin{align}\label{equation:QAOAMixerHamiltonian}
    B \coloneqq \sum_{i = 1}^{n} \sigma_{x}^{(i)}
\end{align}
is the QAOA-mixer Hamiltonian, to the initial state $\ket{+}$, the uniform superposition of all CB states.
Initial state, mixer, and phase separator are chosen to approximate the ground state arbitrarily well for sufficiently deep circuits~\cite{Binkowski2024ElementaryProofOfQaoaConvergence}.
The QAOA's scope can be extended to general CCOPs by introducing soft-constraints:
The objective function $c$ now penalises infeasible bit strings, e.g., by choosing a penalty $\alpha > 0$ and redefining
\begin{align}\label{equation:QAOASoftConstrained}
    \tilde{c}(\bm{b}) \coloneqq \begin{cases}
        c(\bm{b}),& \text{if } \bm{b} \in \solset \\
        c(\bm{b}) + \alpha,& \text{else}.
    \end{cases}
\end{align}
If $\alpha$ is sufficiently large, the minima of $\tilde{c}$ and $c$ coincide, so applying QAOA to~\eqref{equation:QAOASoftConstrained} is expected to yield a good approximation of the optimal solution.
However, empirical studies show that the hyperparameter $\alpha$ has to be carefully fine-tuned to sufficiently suppress infeasibility on the one hand, but to not dominate the optimisation landscape on the other~\cite{GrandRive2019KnapsackProblemVariantsOfQaoaForBatteryRevenueOptimisation}.

As an alternative to using soft-constraints, Hadfield~\textit{et~al.} developed a more general framework under the same acronym QAOA, which considers feasible initial states and feasibility-preserving mixers to force constraint satisfaction throughout the entire routine, rendering additional soft-constraints obsolete.
Formally, an $n$-qubit operator $A \in \lo(\hil)$ is called feasibility-preserving if it maps feasible states to feasible states, i.e., if $A(\solspace) \subseteq \solspace$.
Provided the initial state $\ket{\iota} \in \solspace$ is feasible, the framework effectively restricts the entire problem to the subspace $\solspace$.
Note that since the objective Hamiltonian $C$ is diagonal in the CB, so is the phase separator $\exp(-i \gamma C)$ for all parameter values $\gamma$;
hence it is especially feasibility-preserving.
Restricting the optimisation to the feasible subspace $\solspace$ therefore ultimately depends on the QAOA-mixer.

A proper QAOA-mixer $\mixeru$ not only preserves feasibility, but also fulfils certain mixing properties.
In~\cite{Hadfield2019FromTheQuantumApproximateOptimizationAlgorithmToAQuantumAlternatingOperatorAnsatz}, $\mixeru$ is required to provide transitions between all feasible CB states, that is for every pair $\bm{b}, \bm{y} \in \solset$ there exists some parameter value $\beta^{*} \in \R$ and some power $r \in \N$ such that $\braket{\bm{b} | \mixeru^{r}(\beta^{*}) | \bm{y}} \neq 0$.
Strengthening this condition to irreducibility and element-wise non-negativity of the reduced mixer Hamiltonian $B_{\text{M}}\vert_{\solspace}$, reintroducing the two crucial properties of \eqref{equation:QAOAMixerHamiltonian}, such that $\mixeru(\beta) = \exp(-i \beta B_{\text{M}})$ allows to deduce the same convergence result as for the original QAOA~\cite{Binkowski2024ElementaryProofOfQaoaConvergence}.

Hadfield~\textit{et~al.}'s generalisation of the QAOA has undoubtedly advanced the variational toolbox for CCOPs and has influenced numerous problem-specific improvements of the QAOA.
However, it is lacking the charm of the original QAOA proposal regarding out-of-the-box applicability:
In the original proposal, initial state and mixer are problem-agnostic.
The problem's structure solely enters via the phase separator and the expectation value, and both are uniquely determined by the classical objective function (modulo soft-constraints).
In contrast, the construction of a problem-specific mixer does not follow a universal pattern.
Mixers need to be handcrafted individually for each CCOP and typically require deep insights into the problem's feasibility structure and symmetries.
For example,~\cite{Hadfield2019FromTheQuantumApproximateOptimizationAlgorithmToAQuantumAlternatingOperatorAnsatz} already gives constructions for several important CCOPs such as the Travelling Salesperson Problem~(TSP) which do not follow a clear unique engineering rule.

The proposal of Grover-mixer QAOA~\cite{Bartschi2020GroverMixersForQaoaShiftingComplexityFromMixerDesignToStatePreparation} gives a somewhat unified approach to constructing feasibility-preserving mixers:
For a given CCOP, let $\ket{\solset} \coloneqq \lvert \solset\rvert^{-1 / 2} \sum_{\bm{b} \in \solset} \ket{\bm{b}}$ be the uniform superposition of all feasible CB states and $U_{\solset} \in \lo(\hil)$ be the unitary circuit generating that superposition, i.e., $U_{\solset} \ket{\bm{0}} = \ket{\solset}$.
Then, $\exp(- i \beta \ketbra{\solset}{\solset})$ constitutes a valid QAOA-mixer in the sense of~\cite{Hadfield2019FromTheQuantumApproximateOptimizationAlgorithmToAQuantumAlternatingOperatorAnsatz}.
Note that this mixer can be constructed from $U_{\solset}$ via $\exp(-i \beta \ketbra{\solset}{\solset}) = U_{\solset}^{\vphantom{\dagger}} (\one - (1 - e^{-i \beta}) \ketbra{\bm{0}}{\bm{0}}) U_{\solset}^{\dagger}$.
Formally, this ansatz introduces a unique mixer design and specific constructions for Max $k$-VertexCover, the TSP, and discrete portfolio re-balancing were discussed in~\cite{Bartschi2020GroverMixersForQaoaShiftingComplexityFromMixerDesignToStatePreparation}.
Crucially, there is no general recipe how to construct the essential state preparation circuit $U_{\solset}$ for a given problem.

Lastly, all discussed variants of QAOA, like most VQAs share the tendency of exhibiting \emph{barren plateaus} (BPs)~\cite{Mcclean2018BarrenPlateausInQuantumNeuralNetworkTrainingLandscapes}.
For PQCs that can produce all states sufficiently uniformly, the gradient of $E(\bm{\theta})$ has vanishing mean and variance decreasing exponentially in the number of qubits.
Hence starting from random initial parameters, each parameter update requires a number of measurements scaling exponentially in the number of qubits.
The parameter landscape becomes practically flat almost everywhere which affects gradient-based methods and gradient-free optimisation routines alike~\cite{Arrasmith2021EffectOfBarrenPlateausOnGradientFreeOptimization}.
BPs also emerge from quantum device noise for PQCs growing linearly with the number of qubits~\cite{Wang2021NoiseInducedBarrenPlateausInVariationalQuantumAlgorithms}, from quantum entanglement itself~\cite{OrtizMarrero2021EntanglementInducedBarrenPlateaus}, and even exist in shallow circuits, provided that $E(\bm{\theta})$ is the expectation value of global observables~\cite{Cerezo2021CostFunctionDependentBarrenPlateausInShallowParametrizedQuantumCircuits}.

\subsection{\label{subsection:QuantumConicProgramming}Quantum conic programming}

Across all versions of the QAOA, the parameter updates carried out by the CPU are multi-dimensional, non-convex optimisation problems.
These are notoriously hard, in a certain sense even harder than the ground state search they are derived from~\cite{Bittel2021TrainingVariationalQuantumAlgorithmsIsNpHard}.
This is why some recent proposals, such as Huggings~\textit{et~al.}'s non-orthogonal variational quantum eigensolver~\cite{Huggins2020ANonOrthogonalVariationalQuantumEigensolver}, suggest VQAs whose parameter optimisation admits \emph{convex} objective functions for the classical computer.
The proposed algorithms share the formulation of the parameter optimisation as a \emph{generalised eigenvalue problem} (GEP) or, more generally, as a \emph{semidefinite program} (SDP), and the implementation of the quantum state update via \emph{Linear Combinations of Unitaries} (LCU)~\cite{Childs2012HamiltonianSimulationUsingLinearCombinationsOfUnitaryOperations}.
In the following, we provide a more detailed recap of QCP as introduced in~\cite{Binkowski2025FromBarrenPlateausThroughFertileValleysConicExtensionsOfParameterisedQuantumCircuits} which is already specifically tailored to (unconstrained) combinatorial optimisation problems and, similar to other proposals, designed in such a way that the effect of BPs are mitigated.

Starting from some state $\ket{\phi}$ which is the result of applying an already optimised QAOA-PQC to the initial state $\ket{+}$, we introduce the set of unitaries $\{U_{i}\}_{i = 1}^{3}$ with $U_{1} = \exp(-i \delta_{1} B)$, $U_{2} = \exp(-i \delta_{2} C)$, and $U_{3} = \one$ with fixed (random) angles $\delta_{1}$ and $\delta_{2}$.\footnote{
    The construction is actually formulated for an arbitrary number $\ell \in \N$ of unitaries, but we restrict this recap to the QAOA-specific case $\ell = 3$.
}
Using these unitaries, we define the parameterised ansatz class
\begin{align}\label{equation:AnsatzClass}
    \mathcal{M}_{\bm{\alpha}} \ket{\phi} \coloneqq \frac{M_{\bm{\alpha}} \ket{\phi}}{\norm{M_{\bm{\alpha}} \ket{\phi}}}\ \text{with}\ M_{\bm{\alpha}} \coloneqq \sum_{j = 1}^{3} \alpha_{j} U_{j},\ \bm{\alpha} \in \C^{3}.
\end{align}
The $\mathcal{M}_{\bm{\alpha}}$ are therefore non-unitary LCU-circuits, parameterised by three complex numbers $\bm{\alpha}$.
Note that since $U_{3} = \one$, choosing $\bm{\alpha} = (0, 0, 1)$ always reproduces the initial state $\ket{\phi}$;
hence minimising over the ansatz class \eqref{equation:AnsatzClass} cannot worsen the result.
The parameter optimisation for the proposed ansatz class translates into
\begin{taggedsubequations}{QCP-OPT}\label{equation:QCPParameterOptimisation}
    \begin{alignat}{2}
        &\min_{\bm{\alpha} \in \C^{3}} &&\braket{\phi | M_{\bm{\alpha}}^{\dagger} C M_{\bm{\alpha}}^{\vphantom{\dagger}} | \phi} \\
        &\text{ s.t. } &&\braket{\phi | M_{\bm{\alpha}}^{\dagger} M_{\bm{\alpha}}^{\vphantom{\dagger}} | \phi} = 1. \label{equation:QCPParameterOptimisation2}
    \end{alignat}
\end{taggedsubequations}

In order to obtain a tangible version of the abstract problem \eqref{equation:QCPParameterOptimisation} to be handed to a CPU, we introduce the sample matrices $\mathbf{E}, \mathbf{H} \in \C^{3 \times 3}$ defined via
\begin{align*}
    \mathbf{E}_{j k} \coloneqq \braket{\phi | U_{j}^{\dagger} U_{k}^{\vphantom{\dagger}} | \phi}\ \text{and}\ \mathbf{H}_{j k} \coloneqq \braket{\phi | U_{j}^{\dagger} C U_{k}^{\vphantom{\dagger}} | \phi}
\end{align*}
so that we can compactly write
\begin{align}
    \braket{\phi | M_{\bm{\alpha}}^{\dagger} M_{\bm{\alpha}}^{\vphantom{\dagger}} | \phi} = \sum_{j, k = 1}^{3} \overline{\alpha_{j}} \braket{\phi | U_{j}^{\dagger} U_{k}^{\vphantom{\dagger}} | \phi} \alpha_{k} = \bm{\alpha}^{\dagger} \mathbf{E} \bm{\alpha}, \label{equation:SampleMatricesIdentity1}\ \text{and}\\
    \braket{\phi | M_{\bm{\alpha}}^{\dagger} C M_{\bm{\alpha}}^{\vphantom{\dagger}} | \phi} = \sum_{j, k = 1}^{3} \overline{\alpha_{j}} \braket{\phi | U_{j}^{\dagger} C U_{k}^{\vphantom{\dagger}} | \phi} \alpha_{k} = \bm{\alpha}^{\dagger} \mathbf{H} \bm{\alpha}. \label{equation:SampleMatricesIdentity2}
\end{align}
Therefore, \eqref{equation:QCPParameterOptimisation} has been transformed into
\begin{align*}
    &\min_{\bm{\alpha} \in \C^{3}} \bm{\alpha}^{\dagger} \mathbf{H} \bm{\alpha} \\
    &\text{s.t. } \bm{\alpha}^{\dagger} \mathbf{E} \bm{\alpha} = 1
\end{align*}
which may, in turn, be reformulated as a three-dimensional GEP $\mathbf{H} \bm{\alpha} = \lambda \mathbf{E} \bm{\alpha}$, where the minimal generalised eigenvalue of $\mathbf{H}$ and $\mathbf{E}$ and the corresponding generalised eigenvector coincide with the optimum and optimiser of \eqref{equation:QCPParameterOptimisation}.
We defer the discussion on how to obtain the matrix elements of $\mathbf{E}$ and $\mathbf{H}$ as well as the detailed problem derivations to the \hyperref[section:Framework]{next section}.

Lastly, we wish to implement the LCU with parameters corresponding to the optimiser of \eqref{equation:QCPParameterOptimisation}.
This generally non-unitary and thus inherently non-deterministic step comes with a certain failure probability which can be suppressed, although never eliminated, by more involved state preparation and measurement protocols on a two-qubit ancilla register.
The best strategy found in~\cite{Binkowski2025FromBarrenPlateausThroughFertileValleysConicExtensionsOfParameterisedQuantumCircuits} is to prepare the state
\begin{align*}
    \ket{\psi}_{\ancilla} \coloneqq \frac{(\sqrt{\bm{\alpha}}, 0)}{\norm{\sqrt{\bm{\alpha}}}_{2}} = \frac{(\sqrt{\bm{\alpha}}, 0)}{\sqrt{\norm{\bm{\alpha}}_{1}}}
\end{align*}
in the ancilla register, where $\sqrt{\bm{\alpha}}$ is the vector obtained by taking the entry-wise (principle) square root of $\bm{\alpha}$, and $(\sqrt{\bm{\alpha}}, 0)$ means to pad this vector with a zero in order to obtain a vector of length $4 = 2^{2}$.
Subsequently, we apply the compound unitary
\begin{align*}
    \mathcal{U} &\coloneqq C_{01}(U_{2}) C_{00}(U_{1}) \\
    &= U_{1} \hspace*{-2pt} \otimes \hspace*{-2pt} \ketbra{00}{00} + U_{2} \hspace*{-2pt} \otimes \hspace*{-2pt} \ketbra{01}{01} + U_{3} \hspace*{-2pt} \otimes \hspace*{-2pt} (\ketbra{10}{10} + \ketbra{11}{11})
\end{align*}
to the product state $\ket{\phi} \otimes \ket{\psi}_{\ancilla}$, where, e.g., $C_{00}(U_{1})$ denotes the unitary $U_{1}$ applied to the main register, controlled on the $\ket{00}$-state in the ancilla register.
Finally, we apply the inverse state preparation circuit of the state
\begin{align*}
    \ket{\xi}_{\ancilla} \coloneqq \frac{\big(\overline{\sqrt{\bm{\alpha}}}, 0\big)}{\norm{\sqrt{\bm{\alpha}}}_{2}} = \frac{\big(\overline{\sqrt{\bm{\alpha}}}, 0\big)}{\sqrt{\norm{\bm{\alpha}}_{1}}}
\end{align*}
to the ancilla register and conduct a binary measurement in the CB, distinguishing between the all-zero bit string and all other bit strings.
The success probability corresponds exactly to the probability of measuring all-zero and is precisely given by $\norm{\bm{\alpha}}_{1}^{-2}$.
Classical post-processing on whether the measurement outcome is all-zero then implements the entire LCU which updates the quantum state according to the chosen parameter vector $\bm{\alpha}$.

\section{\label{section:Framework}Framework}

As a first step, we conceptually decouple QCP from the QAOA.
This allows for formulating the former in a more compact way, highlighting its essential building blocks and their properties.
From this cleansed point of view, we generalise QCP to general CCOPs without introducing soft-constraints, lifting its many favourable features such as mapping the ground state search to an efficiently solvable classical problem as well as evading BPs to a more general class of problems.
Unlike the generalisation of the QAOA discussed in \autoref{section:Preliminaries}, our generalisation of QCP is again constructive in the sense that knowledge of the classical objective function and some classical feasibility oracle suffices to directly construct the entire QCP routine.
Finally, we reconnect the generalised QCP to the QAOA and discuss how problem-specific mixers can further improve the execution of QCP.

\subsection{\label{subsection:ParameterisedAnsatzClass}Parameterised ansatz class}

Let $(n, c, \solset)$ be a generic CCOP in NP.
We call a set of unitaries $\{U_{j}\}_{j = 1}^{\ell} \subset \lo(\hil)$ such that $U_{\ell} = \one$, a collection of search unitaries.
The initial state $\ket{\iota}$ passes through a sequence of $L$ LCUs,
\begin{align}
    \mathcal{M}_{\bm{\alpha}_{a}}^{(a)} \ket{\phi} \coloneqq \frac{\sum_{j_{a} = 1}^{\ell_{a}} \alpha_{j_{a}}^{(a)}\, U_{j_{a}}^{(a)} \ket{\phi}}{\norm{\sum_{j_{a} = 1}^{\ell_{a}} \alpha_{j_{a}}^{(a)}\, U_{j_{a}}^{(a)} \ket{\phi}}}
\end{align}each characterised by a set of search unitaries $\{U^{(a)}_{j_{a}}\}_{j_{a} = 1}^{\ell_{a}}$ and the complex-valued coefficients $\bm{\alpha^{(a)}} \equiv (\alpha_{1}^{(a)}, \dots, \alpha_{\ell_{a}}^{(a)})$.
Upon measurement, given all LCUs have been implemented successfully, the main register is in the state
\begin{align}\label{equation:QCPstate}
    \ket{\phi} = \sum_{j_{1} = 1}^{\ell_{1}} \cdots \sum_{j_{L} = 1}^{\ell_{L}} \alpha_{j_{L}}^{(L)} \cdots \alpha_{j_{1}}^{(1)}\; U_{j_{L}}^{(L)} \cdots U_{j_{1}}^{(1)} \ket{\iota}.
\end{align}
Each optimisation step updates one coefficient vector $\bm{\alpha}^{(k)}$ while all others are held fixed; the schedule assigning $k$ to a step is part of the ansatz.
If the initial state $\ket{\iota}$ is feasible, e.g, it corresponds to the solution of some classical approximation or, more generally, it is the superposition of feasible CB states, the search unitaries can be chosen without any additional requirements.
Otherwise, they are required to allow for initial parameters $(\bm{\alpha}^{(1)}, \dots, \bm{\alpha}^{(L)})$ such that the output state~\eqref{equation:QCPstate} before the optimisation process is feasible, i.e., $\ket{\phi} \in \solspace$.
For example, let $\ket{\iota} \equiv \ket{+}$, $L \equiv n$, and $\{Y_{1}, Y_{2}, \dots, Y_{n}, \one\}$, where $Y_{j}$ is the Pauli-$Y$ gate on the $j$th qubit, be the set of search unitaries for each LCU.
Choosing $\alpha_{a}^{(a)} = (-1)^{b_{a}} \cdot i/\sqrt{2}$ and $\alpha_{n+1}^{(a)} = 1/\sqrt{2}$, with all remaining coefficients of $\bm{\alpha}^{(a)}$ set to zero, gives $M_{\bm{\alpha}^{(a)}}^{(a)} = \left(\one + i (-1)^{b_{a}} Y_{a} \right) / \sqrt{2}$ which maps $\ket{+}$ on the $a$th qubit to $\ket{b_{a}}$.
Applying all $n$ LCUs in sequence thus prepares an arbitrary CB state $\ket{\bm{b}} \in \{\ket{\bm{b}} \;:\; \bm{b} \in \bit^{n} \}$ from $\ket{+}$.

Instead of introducing penalties to the objective Hamiltonian, as described in~\autoref{section:Preliminaries}, we propose to directly introduce hard constraints to the optimisation task:
Let $\projection_{\solspace}$ denote the orthogonal projection onto the feasible subspace $\solspace$, then $\one - \projection_{\solspace}$ is the orthogonal projection onto its orthogonal complement $\solspace^{\perp}$.
In addition to the normalisation constraint \eqref{equation:QCPParameterOptimisation2}, we also require that $(\one - \projection_{\solspace}) \mathcal{M}_{\bm{\alpha}} \ket{\iota} = 0$, i.e.\ that parameterised candidate states are orthogonal to $\solspace^{\perp}$.
In summary, optimising the coefficients $\bm{\alpha} \equiv \bm{\alpha}^{(k)}$, we arrive at the generalised optimisation problem
\begin{taggedsubequations}{GenQCP-OPT}\label{equation:GenQCPParameterOptimisation}
    \begin{alignat}{2}
        &\min_{\bm{\alpha} \in \C^{\ell}} &&\braket{\psi | M_{\bm{\alpha}}^{\dagger}\, T(C) M_{\bm{\alpha}}^{\vphantom{\dagger}} | \psi} \\
        &\text{ s.t. } &&\braket{\psi | M_{\bm{\alpha}}^{\dagger}\, T(\one) M_{\bm{\alpha}}^{\vphantom{\dagger}} | \psi} = 1 \\
        &\hphantom{\text{ s.t. }} &&\braket{\psi | M_{\bm{\alpha}}^{\dagger}\, T(\one - \projection_{\solspace}) M_{\bm{\alpha}}^{\vphantom{\dagger}} | \psi} = 0,
    \end{alignat}
\end{taggedsubequations}
with the output of the sequence prior to the $k$-th LCU $\ket{\psi}$ and $T(A) = (\mathcal{M}_{\bm{\alpha}_{k+1}}^{(k+1)})^{\dagger} \cdots (\mathcal{M}_{\bm{\alpha}_{L}}^{(L)})^{\dagger} A\; \mathcal{M}_{\bm{\alpha}_{L}}^{(L)} \cdots \mathcal{M}_{\bm{\alpha}_{k+1}}^{(k+1)}$.

\subsection{\label{subsection:SampleMatrices}Sample matrices}

The two original QCP sample matrices encode the correct normalisation ($\mathbf{E}$) and the objective ($\mathbf{H}$), but do not contain any information regarding feasibility.
Since we do not modify the Hamiltonian, the $\mathbf{H}$-matrix stays the same.
Renaming $\mathbf{E}$ to $\mathbf{F}$ and introducing an additional matrix $\mathbf{G}$ which quantifies orthogonality to $\solspace^{\perp}$, gives rise to the augmented parameter optimisation.
We start by formalising the process of constructing general sample matrices.
\begin{lemma}\label{lemma:ConstructingSampleMatricesIsCompletelyPositive}
    For every state $\ket{\psi} \in \hil$ and every collection of unitary operators $\{U_{j}\}_{j = 1}^{\ell} \subset \lo(\hil)$, the map
    \begin{align}
        \begin{split}
            T : \lo(\hil) &\rightarrow \C^{\ell \times \ell}, \\
            A &\mapsto \big(\hspace*{-3pt}\braket{\psi | U_{j}^{\dagger} A U_{k}^{\vphantom{\dagger}} | \psi}\hspace*{-3pt}\big)_{j, k = 1}^{\ell}
        \end{split}
    \end{align}
    is positive.
    If the family of states $\{U_{j} \ket{\psi} \defcolon j = 1, \ldots, \ell\}$ is linearly independent and $B \in \lo(\hil)$ is positive-definite, so is $T(B)$ in $\C^{\ell \times \ell}$.
\end{lemma}

\begin{proof}
    If $A \succeq 0$, it can be expressed as $A = X^{\dagger} X$.
    Since
    \begin{align*}
        T(A)_{j k} = \braket{\psi | U_{j}^{\dagger} A U_{k}^{\vphantom{\dagger}} | \psi} = \braket{\psi | (X U_{j}^{\vphantom{\dagger}})^{\dagger} X U_{k}^{\vphantom{\dagger}} | \psi},
    \end{align*}
    $T(A)$ is the Gram matrix of the vectors $\ket{x_{j}} \coloneqq X U_{j} \ket{\psi}$, hence it is positive semidefinite.
    Furthermore,
    \begin{align*}
        0 = v^{\dagger}\, T(A)\, v = \norm*{\sum_{j} v_{j} x_{j}}^{2} = \norm*{X \sum_{j} v_{j}\, U_{j} \ket{\psi}}^{2}
    \end{align*}
    for any $v \ne \bm{0}$ implies that $\sum_{j} v_{j}\, U_{j} \ket{\psi} = \bm{0}$, contradicting linear independence of $\{U_{j} \ket{\psi} \defcolon j = 1, \dots, \ell\}$.
\end{proof}

Our three sample matrices therefore are the images under $T$ of the positive operators $\one$, $\one - \projection_{\solspace}$, and $C$, respectively.%
\footnote{%
    For the sake of readability, we consider QCP with a single LCU until further notice.
    Without further ado, all implications follow for \autoref{equation:GenQCPParameterOptimisation} from~\autoref{lemma:ConstructingSampleMatricesIsCompletelyPositive} as well.
}
That is, we define
\begin{alignat*}{3}
    &\mathbf{F}_{j k} &&\coloneqq T(\one)_{j k} &&= \braket{\iota | U_{j}^{\dagger} \one U_{k}^{\vphantom{\dagger}} | \iota}, \\
    &\mathbf{G}_{j k} &&\coloneqq T(\one - \projection_{\solspace})_{j k} &&= \braket{\iota | U_{j}^{\dagger} (\one - \projection_{\solspace}) U_{k}^{\vphantom{\dagger}} | \iota}, \\
    &\mathbf{H}_{j k} &&\coloneqq T(C)_{j k} &&= \braket{\iota | U_{j}^{\dagger} C U_{k}^{\vphantom{\dagger}} | \iota}.
\end{alignat*}
Since $T$ is positive according to \autoref{lemma:ConstructingSampleMatricesIsCompletelyPositive}, $\mathbf{F}$, $\mathbf{G}$, and $\mathbf{H}$ are positive-semidefinite matrices.
In addition, choosing the unitaries $U_{i}$ such that $\{U_{i} \ket{\iota} \defcolon j = 1, \ldots, \ell\}$ is linearly independent, \autoref{lemma:ConstructingSampleMatricesIsCompletelyPositive} guarantees that $\mathbf{F}$ is positive-definite.
W.l.o.g., we may assume that $C$ is positive-definite (this can be simply achieved by adding a sufficiently large offset to the classical objective function).
Then, again by \autoref{lemma:ConstructingSampleMatricesIsCompletelyPositive}, $\mathbf{H}$ is positive-definite.
For $\mathbf{G}$, we cannot make such a claim in full generality.
A calculation analogous to \eqref{equation:SampleMatricesIdentity1} and \eqref{equation:SampleMatricesIdentity2} shows that \eqref{equation:GenQCPParameterOptimisation} is equivalent to the more tangible problem
\begin{align}\label{equation:GenQCPMatrixParameterOptimisation}
    \begin{split}
        &\min_{\bm{\alpha} \in \C^{\ell}} \bm{\alpha}^{\dagger} \mathbf{H} \bm{\alpha} \\
        &\mathhphantomdisplay{\min_{\bm{\alpha} \in \C^{\ell}}}{\ \,\text{s.t.}}\ \mathhphantomdisplay{\bm{\alpha}^{\dagger} \mathbf{G} \bm{\alpha}}{\bm{\alpha}^{\dagger} \mathbf{F} \bm{\alpha}} = 1 \\
        &\hphantom{\min_{\bm{\alpha} \in \C^{\ell}}}\ \bm{\alpha}^{\dagger} \mathbf{G} \bm{\alpha} = 0.
    \end{split}
\end{align}

Typically, the sample matrices' off-diagonal entries are calculated with the aid of the Hadamard test~\cite{Ekert2002DirectEstimationsOfLinearAndNonlinearFunctionalsOfAQuantumState} and Pauli measurements~\cite{Bharti2021IterativeQuantumAssistedEigensolver} for all Pauli terms in the respective observables.
However, note that we do not have direct access to the infeasible projection $\one - \projection_{\solspace}$.
Additionally, all three operators $\one$, $\one - \projection_{\solspace}$, and $C$ mutually commute as they are all diagonal in the CB, meaning that they can be simultaneously measured.
In fact, we propose to implement neither of these operators directly on the QPU, but to approximate the matrix entries solely via sampling bit strings from the states $U_{j} \ket{\iota}$ via measurements in the CB, and to apply the classical objective function and feasibility oracle, respectively.

The diagonal $\braket{\iota | U_{j}^{\dagger} A U_{j}^{\vphantom{\dagger}} | \iota} = \braket{j | A | j}$, $A \in \{\one, \one - \projection_{\solspace}, C\}$ holds expectation value of $A$ w.r.t. the states $\ket{j} \coloneqq U_{j} \ket{\iota}$.
We jointly estimate them by sampling $m \in \N$ bit strings from $\ket{j}$ and calculating the empirical average of the classical objective function (for $A = C$) and the frequency of infeasible states (for $A = \one - \projection_{\solspace}$), respectively.
For the off-diagonal entries, the polarisation identity implies that
\begin{align}\label{equation:PolarisationIdentity}
    \begin{split}
        \braket{\iota | U_{j}^{\dagger} A U_{k}^{\vphantom{\dagger}} | \iota} &= \frac{1}{2} \big(\braket{\iota | (U_{j}^{\dagger} + U_{k}^{\dagger}) A (U_{j}^{\vphantom{\dagger}} + U_{k}^{\vphantom{\dagger}}) | \iota} \\
        &\quad\quad - \braket{\iota | U_{j}^{\dagger} A U_{j}^{\vphantom{\dagger}} | \iota} - \braket{\iota | U_{k}^{\dagger} A U_{k}^{\vphantom{\dagger}} | \iota}\big) \\
        &\quad - \frac{i}{2} \big(\braket{\iota | (U_{j}^{\dagger} - i U_{k}^{\dagger}) A (U_{j}^{\vphantom{\dagger}} + i U_{k}^{\vphantom{\dagger}}) | \iota} \\
        &\quad\quad - \braket{\iota | U_{j}^{\dagger} A U_{j}^{\vphantom{\dagger}} | \iota} - \braket{\iota | U_{k}^{\dagger} A U_{k}^{\vphantom{\dagger}} | \iota}\big).
    \end{split}
\end{align}
Apart from the $j$th and $k$th diagonal entry,~\eqref{equation:PolarisationIdentity} requires to estimate expectation values of $A$ w.r.t the unnormalised vectors $(U_{j} + U_{k}) \ket{\iota}$ and $(U_{j} + i U_{k}) \ket{\iota}$, for all combinations of $j, k = 1, \ldots, \ell$ such that $j \neq k$.
By hermiticity of the sample matrices, it suffices to restrict the calculation to $k < j$.
Introducing an additional ancilla qubit and the binary LCU operation
\begin{align}\label{equation:SampleMatricesRealPartCircuit}
    \Lambda_{\Re}^{(j k)} &\coloneqq (\one \otimes H) C_{1}(U_{k}) C_{0}(U_{j}) (\one \otimes H),
\end{align}
maps the initial state $\ket{\iota} \otimes \ket{0}$ to
\begin{align*}
    (\one \otimes H) \big(U_{j} \otimes \ketbra{0}{0} + U_{k} \otimes \ketbra{1}{1}\big) (\one \otimes H) \ket{\iota} \otimes \ket{0} \\
    = \frac{1}{2} \big(U_{j} + U_{k}\big) \ket{\iota} \otimes \ket{0} + \frac{1}{2} \big(U_{j} - U_{k}\big) \ket{\iota} \otimes \ket{1}.
\end{align*}
Indeed, this is very similar to the Hadamard test for calculating (the real part of) inner products of states.
A subsequent measurement of the ancilla qubit in the CB with outcome zero readily implements the normalised version of $(U_{j} + U_{k}) \ket{\iota}$ on the main register.
The normalisation factor's inverse is given by
\begin{align}\label{equation:NormalisationFactorReal}
    \big\lVert(U_{j}^{\vphantom{\dagger}} + U_{k}^{\vphantom{\dagger}}) \ket{\iota}\big\rVert &= \sqrt{\braket{\iota | (U_{j}^{\dagger} + U_{k}^{\dagger}) (U_{j}^{\vphantom{\dagger}} + U_{k}^{\vphantom{\dagger}}) | \iota}} \\
    &= \sqrt{2 + 2 \Re(\braket{\iota | U_{j}^{\dagger} U_{k}^{\vphantom{\dagger}} | \iota})}.
\end{align}
Similarly, the probability of measuring the ancilla qubit in $\ket{0}$ is
\begin{align}\label{equation:ProbabilityReal}
    p_{j k}(0) = \frac{1}{4} \big\lVert(U_{j}^{\vphantom{\dagger}}\hspace*{-2pt} +\hspace*{-1pt} U_{k}^{\vphantom{\dagger}})\hspace*{-1pt} \ket{\iota}\hspace*{-2pt}\big\rVert^{2} = \frac{1}{2} \big(1\hspace*{-1pt} +\hspace*{-1pt} \Re(\braket{\iota | U_{j}^{\dagger} U_{k}^{\vphantom{\dagger}} | \iota})\big).
\end{align}
Interleaving~\eqref{equation:SampleMatricesRealPartCircuit} with an additional phase gate yields
\begin{align*}
    \Lambda_{\Im}^{(j k)} &\coloneqq (\one \otimes H) C_{1}(U_{k}) C_{0}(U_{j}) (\one \otimes S) (\one \otimes H),
\end{align*}
which maps $\ket{\iota} \otimes \ket{0}$ to
\begin{align*}
    (\one \otimes H) \big(U_{j} \otimes \ketbra{0}{0} + U_{k} \otimes \ketbra{1}{1}\big) (\one \otimes (S H)) \ket{\iota} \otimes \ket{0} \\
    = \frac{1}{2} \big(U_{j} + i U_{k}\big) \ket{\iota} \otimes \ket{0} + \frac{1}{2} \big(U_{j} - i U_{k}\big) \ket{\iota} \otimes \ket{1}.
\end{align*}
Measuring the ancilla qubit in $\ket{0}$, in turn, implements the normalised version of $(U_{j} + i U_{k}) \ket{\iota}$ on the main register with inverse normalisation factor
\begin{align}\label{equation:NormalisationFactorImaginary}
    \big\lVert(U_{j}^{\vphantom{\dagger}} + i U_{k}^{\vphantom{\dagger}}) \ket{\iota}\big\rVert &= \sqrt{\braket{\iota | (U_{j}^{\dagger} - i U_{k}^{\dagger}) (U_{j}^{\vphantom{\dagger}} + i U_{k}^{\vphantom{\dagger}}) | \iota}} \\
    &= \sqrt{2 - 2 \Im(\braket{\iota | U_{j}^{\dagger} U_{k}^{\vphantom{\dagger}} | \iota})}
\end{align}
and probability
\begin{align}\label{equation:ProbabilityImaginary}
    q_{j k}(0) =\hspace*{-1pt} \frac{1}{4} \big\lVert(U_{j}^{\vphantom{\dagger}}\hspace*{-2pt} +\hspace*{-1pt} i U_{k}^{\vphantom{\dagger}})\hspace*{-1pt} \ket{\iota}\hspace*{-2pt}\big\rVert^{2}\hspace*{-2pt} = \frac{1}{2} \big(1\hspace*{-1pt} -\hspace*{-1pt} \Im(\braket{\iota | U_{j}^{\dagger} U_{k}^{\vphantom{\dagger}} | \iota})\big).
\end{align}
In order to obtain the remaining estimates, we execute the respective LCU operations plus subsequent measurement of the ancilla qubit.
If we measure the ancilla in $\ket{0}$, we perform a measurement on the main register's qubits in the CB, which yields a bit string as output.
The empirical averages of the observables $\one, \one - \projection_{\solspace}$, and $C$ over the bit string sample divided by the squared normalisation factors~\eqref{equation:NormalisationFactorReal} and \eqref{equation:NormalisationFactorImaginary} is the estimated expectation value of these observables w.r.t to the unnormalised states $(U_{j} + U_{k}) \ket{\iota}$ and $(U_{j} + i U_{k}) \ket{\iota}$, respectively.
Note that when forming the respective empirical averages, we have to divide by the number of zero measurements which is given by $p_{j k}(0) \cdot m$ and $q_{j k}(0) \cdot m$, respectively, where $m$ is the number of total repetitions of the LCU operation.
According to \eqref{equation:ProbabilityReal} and \eqref{equation:ProbabilityImaginary}, the probabilities are proportional to the squared inverse normalisation factors such that those cancel each other out, leaving a prefactor of $4 / m$ for the empirical averages.
The entire protocol is summarised in \autoref{protocol:EstimateSampleMatrices} as \texttt{samplemat} which inputs a collection of search unitaries $\{U_{j}\}_{j = 1}^{\ell}$, an initial state $\ket{\iota}$, the classical objective function $c$, a classical feasibility oracle $d$ which maps $\bm{b} \in \solset$ to one and $\bm{b} \in \bit^{n} \setminus \solset$ to zero, as well as the number of samples $m$ that are used for each matrix entry.
\begin{algorithm}
    \caption{\label{protocol:EstimateSampleMatrices}\texttt{samplemat}($\{U_{j}\}_{j = 1}^{\ell},\, \ket{\iota},\, c,\, d,\, m$)}
    Initialise $\mathbf{F}$, $\mathbf{G}$, and $\mathbf{H}$ as complex $\ell \times \ell$ matrices\;
    \For{$j = 1, \ldots, \ell$}{
        $\mathbb{X}_{j}$ $\coloneqq$ list of $m$ bit string samples from $U_{j} \ket{\iota}$\;
        $\mathbf{F}_{j j}\hspace*{-1pt} \coloneqq\hspace*{-1pt} 1$\;
        $\mathbf{G}_{j j}\hspace*{-1pt} \coloneqq\hspace*{-1pt} \frac{1}{m} \sum_{\bm{b} \scriptin \mathbb{X}_{j}} (1 - d(\bm{b}))$\;
        $\mathbf{H}_{j j}\hspace*{-1pt} \coloneqq\hspace*{-1pt} \frac{1}{m} \sum_{\bm{b} \scriptin \mathbb{X}_{j}} c(\bm{b})$\;
        \For{$k = 1, \ldots, j - 1$}{
            $\mathbb{X}_{j k}\hspace*{-1pt} \coloneqq\hspace*{-1pt}$ list of $m$ samples from $\Lambda_{\Re}^{(j k)} \ket{\iota} \otimes \ket{0}$\;
            $\mathbb{Y}_{j k}\hspace*{-1pt} \coloneqq\hspace*{-1pt}$ list of $m$ samples from $\Lambda_{\Im}^{(j k)} \ket{\iota} \otimes \ket{0}$\;
            $\mathbf{F}_{j k}\hspace*{-1pt} \coloneqq\hspace*{-1pt} \frac{1}{2} \Big(\hspace*{-1pt}\frac{4}{m}\hspace*{-1pt} \big(\hspace*{-1pt}\sum_{(\bm{b}, b) \scriptin \mathbb{X}_{j k}}\hspace*{-4pt} (1\hspace*{-1pt} -\hspace*{-1pt} b)\big)\hspace*{-1pt} -\hspace*{-1pt} \mathbf{F}_{j j}\hspace*{-1pt} -\hspace*{-1pt} \mathbf{F}_{k k}\Big)$
            $\quad\quad - \frac{i}{2}\Big(\hspace*{-1pt}\frac{4}{m}\hspace*{-1pt} \big(\hspace*{-1pt}\sum_{(\bm{b}, b) \scriptin \mathbb{Y}_{j k}}\hspace*{-4pt} (1\hspace*{-1pt} -\hspace*{-1pt} b)\big)\hspace*{-1pt} -\hspace*{-1pt} \mathbf{F}_{j j}\hspace*{-1pt} -\hspace*{-1pt} \mathbf{F}_{k k}\Big)$\;
            $\mathbf{F}_{k j}\hspace*{-1pt} \coloneqq\hspace*{-1pt} \overline{\mathbf{F}_{j k}}$\;
            $\mathbf{G}_{j k}\hspace*{-1pt} \coloneqq\hspace*{-1pt} \frac{2}{m} \big(\hspace*{-1pt}\sum_{(\bm{b}, b) \scriptin \mathbb{X}_{j k}}\hspace*{-4pt} (1\hspace*{-1pt} -\hspace*{-1pt} d(\bm{b})) (1\hspace*{-1pt} -\hspace*{-1pt} b)\big)$
            $\quad\quad - \frac{2 i}{m} \big(\hspace*{-1pt}\sum_{(\bm{b}, b) \scriptin \mathbb{Y}_{j k}}\hspace*{-4pt} (1\hspace*{-1pt} -\hspace*{-1pt} d(\bm{b})) (1\hspace*{-1pt} -\hspace*{-1pt} b)\big)$
            $\quad\quad + \Big(\frac{i}{2}\hspace*{-1pt} -\hspace*{-1pt} \frac{1}{2}\Big) (\mathbf{G}_{j j}\hspace*{-1pt} +\hspace*{-1pt} \mathbf{G}_{k k})$\;
            $\mathbf{G}_{k j}\hspace*{-1pt} \coloneqq\hspace*{-1pt} \overline{\mathbf{G}_{j k}}$\;
            $\mathbf{H}_{j k}\hspace*{-1pt} \coloneqq\hspace*{-1pt} \frac{1}{2} \Big(\hspace*{-1pt}\frac{4}{m}\hspace*{-1pt} \big(\hspace*{-1pt}\sum_{(\bm{b}, b) \scriptin \mathbb{X}_{j k}}\hspace*{-4pt} c(\bm{b}) (1\hspace*{-1pt} -\hspace*{-1pt} b)\big)\hspace*{-1pt} -\hspace*{-1pt} \mathbf{H}_{j j}\hspace*{-1pt} -\hspace*{-1pt} \mathbf{H}_{k k}\hspace*{-1pt}\Big)$
            $\quad\ \ - \frac{i}{2}\Big(\hspace*{-1pt}\frac{4}{m}\hspace*{-1pt} \big(\hspace*{-1pt}\sum_{(\bm{b}, b) \scriptin \mathbb{Y}_{j k}}\hspace*{-4pt} c(\bm{b}) (1\hspace*{-1pt} -\hspace*{-1pt} b)\big)\hspace*{-1pt} -\hspace*{-1pt} \mathbf{H}_{j j}\hspace*{-1pt} -\hspace*{-1pt} \mathbf{H}_{k k}\hspace*{-1pt}\Big)$\;
            $\mathbf{H}_{k j}\hspace*{-1pt} \coloneqq\hspace*{-1pt} \overline{\mathbf{H}_{j k}}$\;
        }
    }
    \Return{($\mathbf{F}, \mathbf{G}, \mathbf{H}$)}\;
\end{algorithm}

\subsection{\label{subsection:GeneralisedEigenvalueProblem}Generalised eigenvalue problem}

We proceed by translating \eqref{equation:GenQCPMatrixParameterOptimisation} into a solver-friendlier form: a \emph{generalised eigenvalue problem} (GEP).
This transformation has already been discussed for a single constraint.
We extend the transformation to the doubly-constrained case and fill some gaps present in the literature on GEP derivations with mathematical rigour, that persisted in the proposals by McClean~\textit{et~al.}~\cite{Mcclean2017HybridQuantumClassicalHierarchyForMitigationOfDecoherenceAndDeterminationOfExcitedStates}, Huggings~\textit{et~al.}~\cite{Huggins2020ANonOrthogonalVariationalQuantumEigensolver}, and Binkowski~\textit{et~al.}~\cite{Binkowski2025FromBarrenPlateausThroughFertileValleysConicExtensionsOfParameterisedQuantumCircuits}.
The following assumption will be central for further derivations.
\begin{assumption}\label{assumption:LinearIndependence}
    The state $\ket{\iota} \in \hil$ and unitaries $\{U_{j}\}_{j = 1}^{\ell}$ are chosen in such a way that $\{U_{j} \ket{\iota} \defcolon j = 1, \ldots, \ell\}$ constitutes a linearly independent set of states.
\end{assumption}
Note that, as discussed after \autoref{lemma:ConstructingSampleMatricesIsCompletelyPositive}, this \hyperref[assumption:LinearIndependence]{Assumption} implies that $\mathbf{F}$ and $\mathbf{H}$ will be positive-definite, i.e., positive-semidefinite and invertible.
This, in turn, constitutes a crucial assumption for the derivation of the GEP.

Quadratic optimisation problems such as \eqref{equation:GenQCPMatrixParameterOptimisation} are well-studied objects in the optimisation theory literature with many results on solvability conditions and duality structure~\cite{Thoai2000DualityBoundMethodForTheGeneralQuadraticProgrammingProblemWithQuadraticConstraints}.
In particular,~\cite{Nguyen2022StrongDualityForGeneralQuadraticProgramsWithQuadraticEqualityConstraints} provides a rich collection of tools for investigating strong duality of quadratic problems of which we merely require a small subset.

The first step is to formulate the complex problem \eqref{equation:GenQCPMatrixParameterOptimisation} over a real vector space by canonically identifying $\C^{\ell}$ with $\R^{2 \ell}$ via $\C \ni \bm{\alpha} = \Re(\bm{\alpha}) + i \Im(\bm{\alpha})\, \widehat{=}\, (\Re(\bm{\alpha}), \Im(\bm{\alpha})) \in \R^{2 \ell}$.
Additionally, we introduce the real $2 \ell \times 2 \ell$ symmetric block matrices
\begin{align}\label{equation:RealSymmetricMatrices}
    \bm{F} \coloneqq \begin{pmatrix}
        \Re(\mathbf{F})& -\Im(\mathbf{F}) \\
        \Im(\mathbf{F})& \hphantom{-}\Re(\mathbf{F})
    \end{pmatrix},
\end{align}
as well as $\bm{G}$ and $\bm{H}$ analogously constructed from $\mathbf{G}$ and $\mathbf{H}$, respectively.
Note that $\bm{F}$ and $\bm{H}$ are again positive-definite while $\bm{G}$ is guaranteed to be positive-semidefinite.
We can then readily rewrite \eqref{equation:GenQCPMatrixParameterOptimisation} as
\begin{align}\label{equation:GenQCPParameterOptimisationReal}
    \begin{split}
        &\min_{\bm{x} \deepscriptin \R^{2 \ell}} \bm{x}^{\transpose} \bm{H} \bm{x} \\
        &\mathhphantomdisplay{\min_{\bm{x} \deepscriptin \R^{2 \ell}}}{\ \, \text{s.t.}}\ \mathhphantomdisplay{\bm{x}^{\transpose} \bm{G} \bm{x}}{\bm{x}^{\transpose} \bm{F} \bm{x}} = 1 \\
        &\hphantom{\min_{\bm{x} \deepscriptin \R^{2 \ell}}}\ \bm{x}^{\transpose} \bm{G} \bm{x} = 0.
    \end{split}
\end{align}

Since $\bm{G}$ is positive-semidefinite, it holds, given an $\bm{x} \in \R^{2 \ell}$, that $\bm{x}^{\transpose} \bm{G} \bm{x} = 0$ if and only if $\bm{G} \bm{x} = \bm{0}$.
Thus, we effectively optimise over $\ker(\bm{G})$ which we can absorb implicitly into the problem's domain.
To formulate the problem again in a concrete coordinate space $\R^{m}$ where $m = \dim(\ker(\bm{G}))$, we may conjugate both $\bm{F}$ and $\bm{H}$ with the basis transform $\bm{B}_{\bm{G}}$ into the orthogonal eigenbasis of $\bm{G}$ and then execute the sandwiching with the projection onto $\ker(\bm{G})$ by deleting all rows/columns that correspond to non-kernel eigenvectors of $\bm{G}$.
$\bm{F}$ and $\bm{H}$ both being positive-definite readily implies that also $\bm{B}_{\bm{G}}^{\vphantom{-1}} \bm{F} \bm{B}_{\bm{G}}^{-1}$ and $\bm{B}_{\bm{G}}^{\vphantom{-1}} \bm{H} \bm{B}_{\bm{G}}^{-1}$ are positive definite.
By the same argument as in the proof of \autoref{lemma:ConstructingSampleMatricesIsCompletelyPositive}, their restrictions to the $\R^{m}$-subspace $\tilde{\bm{F}} \coloneqq (\projection_{\R^{m}} \bm{B}_{\bm{G}}^{\vphantom{-1}}\bm{F} \bm{B}_{\bm{G}}^{-1} \projection_{\R^{m}})\vert_{\R^{m}}$ and analogously defined $\tilde{\bm{H}}$ are positive-definite.
It then holds that \eqref{equation:GenQCPParameterOptimisationReal} is equivalent to
\begin{align}\label{equation:GenQCPParameterOptimisationRealImplicit}
    \begin{split}
        &\min_{\bm{x} \scriptin \R^{m}} \bm{x}^{\transpose} \tilde{\bm{H}} \bm{x} \\
        &\mathhphantomdisplay{\min_{\bm{x} \deepscriptin \R^{m}}}{\ \, \text{s.t.}}\ \bm{x}^{\transpose} \tilde{\bm{F}} \bm{x} = 1.
    \end{split}
\end{align}
The Lagrange function for \eqref{equation:GenQCPParameterOptimisationRealImplicit} is
\begin{align}\label{equation:LagrangeFunction}
    \begin{split}
        g : \R^{m} \times \R &\rightarrow \R \\
        (\bm{x}, \lambda) &\mapsto \bm{x}^{\transpose} \tilde{\bm{H}} \bm{x} + \lambda (1 - \bm{x}^{\transpose} \tilde{\bm{F}} \bm{x}).
    \end{split}
\end{align}
and, after rearranging the terms of \eqref{equation:LagrangeFunction}, we find that the Lagrangian dual problem has the form
\begin{align}\label{equation:LagrangianDualProblem}
    \max_{\lambda \scriptin \R} \min_{\bm{x} \deepscriptin \R^{m}} g(\bm{x}, \lambda) = \max_{\lambda \scriptin \R} \bigg(\hspace*{-2pt}\lambda + \min_{\bm{x} \deepscriptin \R^{m}}\hspace*{-1pt} \bm{x}^{\transpose}\hspace*{-1pt} \Big(\hspace*{-1pt}\tilde{\bm{H}}\hspace*{-1pt} -\hspace*{-1pt} \lambda \tilde{\bm{F}}\hspace*{-1pt}\Big) \bm{x}\hspace*{-1pt}\bigg).
\end{align}
Note that for all $\lambda \in \R$ for which the symmetric matrix $\tilde{\bm{H}} - \lambda \tilde{\bm{F}}$ is not positive-semidefinite, we can find a direction $\hat{\bm{x}} \in \R^{m}$ with $\hat{\bm{x}}^{\transpose} (\tilde{\bm{H}} - \lambda \tilde{\bm{F}}) \hat{\bm{x}} < 0$ such that minimising over all $\bm{x} \in \R^{m}$ is unbounded below.
In contrast, in case of positive-semidefiniteness, the minimal possible value is zero which is attained by choosing, e.g., $\bm{x} = \bm{0}$.
This shows that the Lagrangian dual problem \eqref{equation:LagrangianDualProblem} has the equivalent form
\begin{align}\label{equation:ConstrainedLagrangianDualProblem}
    \begin{split}
        &\max_{\lambda \scriptin \R} \lambda \\
        &\mathhphantomdisplay{\max_{\lambda \scriptin \R}}{\ \text{s.t. }}\ \tilde{\bm{H}} - \lambda \tilde{\bm{F}} \succeq 0.
    \end{split}
\end{align}
An adaptation of~\cite[Lemma 2.1]{Nguyen2022StrongDualityForGeneralQuadraticProgramsWithQuadraticEqualityConstraints} translates the positive-semidefiniteness constraint into a system of inequalities.
\begin{lemma}\label{lemma:InequalitySystem}
    Let $\bm{A} \in \R^{d \times d}$ be symmetric and $\bm{B} \in \R^{d \times d}$ be positive-semidefinite.
    Then $\bm{A} \succeq 0$ is equivalent to the inequality system
    \begin{align}
        &\bm{x}^{\transpose} \bm{A} \bm{x} \geq 0 \text{ for all } \bm{x} \in \R^{d} \text{ satisfying } \bm{x}^{\transpose} \bm{B} \bm{x} = 1, \label{equation:InequalitySystem1} \\
        &\bm{x}^{\transpose} \bm{A} \bm{x} \geq 0 \text{ for all } \bm{x} \in \R^{d} \text{ satisfying } \bm{x}^{\transpose} \bm{B} \bm{x} = 0. \label{equation:InequalitySystem2}
    \end{align}
\end{lemma}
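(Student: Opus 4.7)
The plan is to establish the equivalence by a simple case split on the value of $\bm{x}^{\transpose} \bm{B} \bm{x}$, which, thanks to the positive-semidefiniteness of $\bm{B}$, only takes values in $[0, \infty)$. The two inequality systems in \eqref{equation:InequalitySystem1} and \eqref{equation:InequalitySystem2} exactly cover these two regimes after normalization, so the result is essentially a matter of combining them correctly.

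The forward direction is immediate: if $\bm{A} \succeq 0$, then $\bm{x}^{\transpose} \bm{A} \bm{x} \geq 0$ for every $\bm{x} \in \R^{d}$, in particular for those satisfying either $\bm{x}^{\transpose} \bm{B} \bm{x} = 1$ or $\bm{x}^{\transpose} \bm{B} \bm{x} = 0$.

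For the backward direction, I would fix an arbitrary $\bm{x} \in \R^{d}$ and distinguish two cases. If $\bm{x}^{\transpose} \bm{B} \bm{x} = 0$, then \eqref{equation:InequalitySystem2} directly yields $\bm{x}^{\transpose} \bm{A} \bm{x} \geq 0$. Otherwise, by positive-semidefiniteness of $\bm{B}$, we have $\bm{x}^{\transpose} \bm{B} \bm{x} > 0$, and one may rescale by setting $\bm{y} \coloneqq \bm{x}/\sqrt{\bm{x}^{\transpose} \bm{B} \bm{x}}$, so that $\bm{y}^{\transpose} \bm{B} \bm{y} = 1$. Then \eqref{equation:InequalitySystem1} gives $\bm{y}^{\transpose} \bm{A} \bm{y} \geq 0$, and multiplying by the positive scalar $\bm{x}^{\transpose} \bm{B} \bm{x}$ recovers $\bm{x}^{\transpose} \bm{A} \bm{x} \geq 0$. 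Since $\bm{x}$ was arbitrary, $\bm{A} \succeq 0$ follows.

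There is no real obstacle: the key observation is that $\bm{B} \succeq 0$ forces the quadratic form $\bm{x} \mapsto \bm{x}^{\transpose} \bm{B} \bm{x}$ to take only nonnegative values, so rescaling to the level set $\{\bm{x}^{\transpose} \bm{B} \bm{x} = 1\}$ is always possible when it is nonzero, and the degenerate case is precisely handled by the second hypothesis. Notably, the lemma makes no use of any definiteness of $\bm{A}$, and $\bm{B}$ itself need not be invertible.
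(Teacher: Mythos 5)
Your proof is correct and uses the same key idea as the paper's: normalizing $\bm{x}$ by $\sqrt{\bm{x}^{\transpose}\bm{B}\bm{x}}$ to reduce the positive case to the level set $\{\bm{x}^{\transpose}\bm{B}\bm{x}=1\}$, with positive-semidefiniteness of $\bm{B}$ guaranteeing the case split is exhaustive. The only cosmetic difference is that the paper argues by contraposition while you argue directly; the content is identical.
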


\begin{proof}
    $\bm{A} \succeq 0$ trivially implies both \eqref{equation:InequalitySystem1} and \eqref{equation:InequalitySystem2}.
    For the opposite direction, let $\bm{A} \not\succeq 0$, i.e., there exists $\bm{y} \in \R^{d}$ such that $\bm{y}^{\transpose} \bm{A} \bm{y} < 0$.
    If either $\bm{y}^{\transpose} \bm{B} \bm{y} = 1$ or $\bm{y}^{\transpose} \bm{B} \bm{y} = 0$, we are already done.
    If neither of these conditions are fulfilled, consider $\bm{x} \coloneqq \bm{y} / \sqrt{\bm{y}^{\transpose} \bm{B} \bm{y}}$ which, by construction, fulfils $\bm{x}^{\transpose} \bm{B} \bm{x} = 1$, but also $\bm{x}^{\transpose} \bm{A} \bm{x} < 0$, i.e., \eqref{equation:InequalitySystem1} is violated.
\end{proof}

Applying \autoref{lemma:InequalitySystem} to the family $\tilde{\bm{H}} - \lambda \tilde{\bm{F}}$ of symmetric matrices, $\tilde{\bm{F}} \succ 0$ translates the positive-semidefiniteness constraint into two conditions:
For all $\bm{x} \in \R^{m}$, satisfying $\bm{x}^{\transpose} \tilde{\bm{F}} \bm{x} = 1$, it must hold that
\begin{align*}
    \bm{x}^{\transpose} (\tilde{\bm{H}} - \lambda \tilde{\bm{F}}) \bm{x} \geq 0 \iff \bm{x}^{\transpose} \tilde{\bm{H}} \bm{x} \geq \lambda.
\end{align*}
For all $\bm{x} \in \R^{m}$, satisfying $\bm{x}^{\transpose} \tilde{\bm{F}} \bm{x} = 0$, it must hold that
\begin{align}\label{equation:InequalitySystemApplied2}
    \bm{x}^{\transpose} (\tilde{\bm{H}} - \lambda \tilde{\bm{F}}) \bm{x} \geq 0 \Leftrightarrow \bm{x}^{\transpose} \tilde{\bm{H}} \bm{x} \geq 0.
\end{align}
Clearly, \eqref{equation:InequalitySystemApplied2} is trivially satisfied since $\tilde{\bm{H}} \succ 0$.
This leaves us with the equivalent problem
\begin{align*}
    \max\{\lambda \in \R\, \colon \lambda \leq \bm{x}^{\transpose} \tilde{\bm{H}} \bm{x}\, \forall\, \bm{x} \in \R^{m} \text{ satisfying } \bm{x}^{\transpose} \tilde{\bm{F}} \bm{x} = 1\},
\end{align*}
which clearly has the same optimal value as \eqref{equation:GenQCPParameterOptimisationRealImplicit} and hence as \eqref{equation:GenQCPMatrixParameterOptimisation}.
This shows strong duality for the primal-dual pair \eqref{equation:GenQCPParameterOptimisationRealImplicit}--\eqref{equation:LagrangianDualProblem}, provided that the primal problem is solvable.
The latter, however, follows from the fact that the original problem \eqref{equation:GenQCPParameterOptimisation} is always solvable with $\bm{\alpha} = (0, \ldots, 0, 1)$.

After having established strong duality, it remains to show that the Lagrangian dual problem admits an equivalent formulation as a GEP.
For this final derivation, we start with the form \eqref{equation:ConstrainedLagrangianDualProblem} of the dual problem.
First note that the linear function $l \coloneqq [\lambda \mapsto \tilde{\bm{H}} - \lambda \tilde{\bm{F}}]$ is analytic and monotonically decreasing since $(\tilde{\bm{H}} - \lambda \tilde{\bm{F}}) - (\tilde{\bm{H}} - \mu \tilde{\bm{F}}) = (\mu - \lambda) \tilde{\bm{F}} \succeq 0$ for $\mu \ge \lambda$.
Thus, if $l(\lambda_{0}) = \tilde{\bm{H}} - \lambda_{0} \tilde{\bm{F}} \succeq 0$ for a given $\lambda_{0} \in \R$, then also $l(\lambda) \succeq 0$ for all $\lambda \leq \lambda_{0}$.
Consider further the set $D \coloneqq \{\lambda \in \R \defcolon l(\lambda) \succeq 0 \text{ and } l(\lambda) \not\succ 0\}$, i.e.\ the set of all real numbers $\lambda$ for which $l(\lambda)$ is positive-semidefinite, but not positive-definite.
Since \eqref{equation:ConstrainedLagrangianDualProblem} is feasible we know that $\{l(\lambda) \defcolon l(\lambda) \succeq 0\}$ is non-empty, and from the fact that \eqref{equation:ConstrainedLagrangianDualProblem} is bounded above, we deduce that for high enough $\lambda$, $l(\lambda)$ will fail to be positive-semidefinite, that is the complement $\{l(\lambda) \defcolon l(\lambda) \not\succeq 0\}$ is also non-empty.
The continuity of $l$ and the inherited continuity of $l$'s eigenvalue curves~\cite{Kato1995PerturbationTheoryForLinearOperators} then implies that $\{l(\lambda) \defcolon l(\lambda) \succeq 0 \text{ and } l(\lambda) \not\succ 0\}$ is non-empty as well, hence so is its preimage $D$.
Therefore, we can choose $\lambda_{0} \in D$.
Since both $\tilde{\bm{F}}$ and $\tilde{\bm{H}}$ are positive-definite, $l(\lambda)$ will be positive-definite for all $\lambda \leq 0$, i.e.\ it holds that $D \cap (-\infty, 0] = \emptyset$ and therefore that $\lambda_{0} > 0$.
Moreover, by~\cite[Problem 6.2]{Kato1995PerturbationTheoryForLinearOperators}, the function
\begin{align*}
    \gamma_{\min} \coloneqq [\lambda \mapsto \min\{\gamma \in \R \defcolon \gamma \text{ is an eigenvalue of } l(\lambda)\}]
\end{align*}
is concave.
For our chosen $\lambda_{0}$ it holds that $\gamma_{\min}(\lambda_{0}) = 0$;
positive definiteness of $\tilde{\bm{H}}$ especially implies that $\gamma_{\min}(0) > 0$.
For all $\lambda_{1}, \lambda_{2} \in [0, \infty)$ with $\lambda_{2} > \lambda_{1}$ it holds that there exists an $\alpha \in (0, 1)$ such that $\lambda_{1} = \alpha \lambda_{2}$, and therefore, since $\gamma_{\min}$ is concave, that
\begin{align}\label{equation:StrictConcaveInequality}
    \begin{split}
        \gamma_{\min}(\lambda_{1}) &= \gamma_{\min}(\alpha \lambda_{2} + (1 - \alpha) 0) \\
        & \geq \alpha \gamma_{\min}(\lambda_{2})\hspace*{-1pt} +\hspace*{-1pt} (1\hspace*{-1pt} -\hspace*{-1pt} \alpha) \gamma_{\min}(0) > \alpha \gamma_{\min}(\lambda_{2}).
    \end{split}
\end{align}
Applied to $\lambda_{1} = \lambda_{0}$ and an arbitrary $\lambda_{2} > \lambda_{0}$, this shows $\gamma_{\min}(\lambda_{2}) < 0$ since, otherwise by~\eqref{equation:StrictConcaveInequality}, $0 = \gamma_{\min}(\lambda_{0}) > 0$, raising a contradiction.
This result, in turn, shows that $D$ only consists of the single point $\lambda_{0}$ since all values larger than $\lambda_{0}$ give rise to an operator $l(\lambda)$ with at least one negative eigenvalue.
In summary, we have shown that the feasible set of \eqref{equation:ConstrainedLagrangianDualProblem} consists only of a single point $\lambda_{0}$ which fulfils that $l(\lambda_{0})$ is positive-semidefinite with at least one eigenvalue being zero.
Hence there exists a vector $\bm{x} \in \R^{m} \setminus \{\bm{0}\}$ such that
\begin{align}\label{equation:GeneralisedEigenvalueProblem}
    (\tilde{\bm{H}} - \lambda_{0} \tilde{\bm{F}}) \bm{x} = \bm{0}\, \Leftrightarrow\, \tilde{\bm{H}} \bm{x} = \lambda_{0} \tilde{\bm{F}} \bm{x}.
\end{align}
which is precisely the formulation of a GEP.
A GEP-solver applied to the tuple $(\tilde{\bm{H}}, \tilde{\bm{F}})$ will generally output several generalised eigenvalues and eigenvectors.
By the above arguments, $\lambda_{0}$ will be the smallest eigenvalue among them, as for every larger generalised eigenvalue $\tilde{\lambda} > \lambda_{0}$, the operator $l(\tilde{\lambda})$ is not positive-semidefinite.
We normalise the generalised eigenvector $\bm{x}_{0}$ associated to $\lambda_{0}$ by $(\bm{x}_{0}^{\transpose} \tilde{\bm{F}} \bm{x}_{0}^{\vphantom{\transpose}})^{-1 / 2}$.
By the definition of a generalised eigenvector, we find that $\bm{x}_{0}$ is indeed the optimiser of \eqref{equation:ConstrainedLagrangianDualProblem}:
\begin{align*}
    \bm{x}_{0}^{\transpose} \tilde{\bm{H}} \bm{x}_{0}^{\vphantom{\transpose}} = \bm{x}_{0}^{\transpose} (\lambda_{0} \tilde{\bm{F}} \bm{x}_{0}^{\vphantom{\transpose}}) = \lambda_{0} \bm{x}_{0}^{\transpose} \tilde{\bm{F}} \bm{x}_{0}^{\vphantom{\transpose}} = \lambda_{0}.
\end{align*}

Following just established strong duality, undoing the basis change into $\bm{G}$'s eigenbasis, inverting the complex-to-real transformation \eqref{equation:RealSymmetricMatrices}, and utilising the definitions of the sample matrices, we have thus concluded the following theorem.
\begin{theorem}\label{theorem:GeneralisedEigenvalueProblem}
    The minimiser $\bm{\alpha}_{0}$ and the minimum $\epsilon_{0}$ of \eqref{equation:GenQCPParameterOptimisation} are given by
    \begin{align}\label{equation:Minimiser}
        \begin{split}
            \bm{\alpha_{0}} &= \big[\bm{B}_{\bm{G}}^{-1}(\bm{x}_{0}, 0, \ldots, 0)\big]_{1, \ldots, \ell} \\
            &\quad + i \big[\bm{B}_{\bm{G}}^{-1}(\bm{x}_{0}, 0, \ldots, 0)\big]_{\ell + 1, \ldots, 2 \ell}
        \end{split}
    \end{align}
    and the minimal generalised eigenvalue $\lambda_{0}$ of the generalised eigenvalue problem \eqref{equation:GeneralisedEigenvalueProblem}, where $\bm{x}_{0}$ is the generalised eigenvector associated to $\lambda_{0}$.
\end{theorem}

Most importantly, \autoref{theorem:GeneralisedEigenvalueProblem} rigorously reduces the complex quadratic optimisation problem \eqref{equation:GenQCPParameterOptimisation} to an efficiently constructible and solvable, real-valued GEP.

\subsection{\label{subsection:LCUImplementation}LCU implementation}

Solving \eqref{equation:GeneralisedEigenvalueProblem} and applying the transformation \eqref{equation:Minimiser}, we find the parameter vector $\bm{\alpha}$ such that $\mathcal{M}_{\bm{\alpha}} \ket{\iota}$ constitutes a state minimising the expectation value of $C$ among the parameterised ansatz class.
What remains to show is how to prepare the given states on a QPU.
The procedure was already explained in~\cite{Binkowski2025FromBarrenPlateausThroughFertileValleysConicExtensionsOfParameterisedQuantumCircuits} and applies analogously to our generalisation.
In this subsection, we merely provide some additional details of the LCU construction.

For $\ell$ search unitaries, we introduce a $\lceil\log_{2}(\ell)\rceil$-qubit ancilla register $\hil_{\ancilla}$ on which we prepare the quantum state
\begin{align*}
    \ket{\psi}_{\ancilla} \coloneqq \frac{(\sqrt{\bm{\alpha}}, \bm{0})}{\sqrt{\lVert\bm{\alpha}\rVert_{1}}}
\end{align*}
where the padding with $2^{\lceil \log_{2}(\ell)\rceil} - \ell$ many zeros is necessary in order to obtain a valid $\lceil \log_{2}(\ell)\rceil$-qubit state vector.
The state preparation of $\ket{\psi}_{\ancilla}$ may be executed via, e.g., the Grover-Rudolph algorithm~\cite{Ramacciotti2024SimpleQuantumAlgorithmToEfficientlyPrepareSparseStates} with $\mathcal{O}(\ell \log_{2} \ell)$-many gates.
Following the state preparation of $\ket{\psi}_{\ancilla}$, we apply the unitary
\begin{align*}
    \mathcal{U} \coloneqq \sum_{j = 1}^{\ell} U_{j} \otimes \ketbra{j}{j}_{\ancilla} + \one \otimes \Bigg(\sum_{j = \ell + 1}^{2^{\lceil \log_{2}(\ell)\rceil}} \ketbra{j}{j}_{\ancilla}\Bigg)
\end{align*}
to the product state $\ket{\iota} \otimes \ket{\psi}_{\ancilla}$, where we label the ancilla register's CB states with integers $j$ rather than bit strings.
Note that since $\ket{\psi}_{\ancilla}$ does not have any components in the CB states with index larger than $\ell$, the second term does not contribute when applying $\mathcal{U}$ to $\ket{\iota} \otimes \ket{\psi}_{\ancilla}$.
Subsequently, we conduct a binary measurement in direction of
\begin{align*}
    \ket{\xi}_{\ancilla} \coloneqq \frac{\big(\overline{\sqrt{\bm{\alpha}}}, \bm{0}\big)}{\sqrt{\lVert\bm{\alpha}\rVert_{1}}}
\end{align*}
by first applying its inverse state preparation and then conducting a measurement in the CB on the ancilla register, distinguishing between $\ket{\bm{0}}_{\ancilla}$ and all other states.
All these steps taken together have the following action
\begin{align*}
    &\big(\one \otimes \bra{\xi}_{\ancilla}\big) \mathcal{U} \big(\ket{\iota} \otimes \ket{\psi}_{\ancilla}\big) \\
    &= \lVert\bm{\alpha}\rVert_{1}^{-1}\hspace*{-5pt} \sum_{i, j, k = 1}^{\ell}\hspace*{-5pt} \big(\one \otimes \sqrt{\alpha_{i}} \bra{i}_{\ancilla}\hspace*{-1pt}\big) \big(U_{j} \otimes \ketbra{j}{j}_{\ancilla}\hspace*{-1pt}\big) \big(\hspace*{-1pt}\ket{\iota} \otimes \sqrt{\alpha_{k}} \ket{k}_{\ancilla}\hspace*{-1pt}\big) \\
    &= \lVert\bm{\alpha}\rVert_{1}^{-1}\hspace*{-2pt} \sum_{i, j = 1}^{\ell}\hspace*{-2pt} \big(\one \otimes \sqrt{\alpha_{i}} \bra{i}_{\ancilla}\big) \big(U_{j} \ket{\iota} \otimes \sqrt{\alpha_{j}} \ket{j}_{\ancilla}\big) \\
    &= \lVert\bm{\alpha}\rVert_{1}^{-1} \sum_{j = 1}^{\ell} \alpha_{j} U_{j} \ket{\iota} = \lVert\bm{\alpha}\rVert_{1}^{-1} M_{\bm{\alpha}} \ket{\iota}.
\end{align*}
The normalisation constraint $\lVert M_{\bm{\alpha}} \ket{\iota}\rVert = 1$, implies that the probability of measuring the all-zero bit string upon measurement on the ancilla register, i.e., the success probability of implementing the LCU correctly, is precisely $\lVert\bm{\alpha}\rVert_{1}^{-2}$.
A single successful application of the LCU update therefore implements the $\mathcal{M}_{\bm{\alpha}} \ket{\iota}$ on the main register.

\subsection{\label{subsection:FeasibilityPreservingUnitaries}Feasibility-preserving unitaries}

While a key advantage of our construction is the ability to ensure that a successful LCU step is feasibility-preserving even if the comprising search unitaries $U_j$ are not, we briefly discuss simplifications arising in the case when a set of such feasibility-preserving unitaries \emph{is} available for the considered CCOP.
Recall that $U_j$ is feasibility-preserving if $U_j(\solspace) \subset \solspace$, which, by $\ket{\iota} \in \solspace$, implies $\{U_{j} \ket{\iota} \defcolon j = 1, \ldots, \ell\} \subset \solspace$.
Since $\ker(\one - \projection_{\solspace}) = \solspace$, the sample matrix $\mathbf{G}$ (and therefore $\bm{G}$) is the null matrix and the constraint $\bm{\alpha}^\dagger \mathbf{G} \bm{\alpha} = 0$ (and $\bm{x}^\dagger \bm{G} \bm{x} = 0$) is invariably satisfied for all $\bm{\alpha}$ (and $\bm{x}$).
The problem~\eqref{equation:GenQCPParameterOptimisation} is reduced to a higher-dimensional variant of the original problem~\eqref{equation:QCPParameterOptimisation}.
The derivation of strong duality in \autoref{subsection:GeneralisedEigenvalueProblem} still holds for the then trivial case of $m = 2 \ell$ and $\bm{B}_{\bm{G}} = \one_m$, and \autoref{theorem:GeneralisedEigenvalueProblem} implies that the minimiser and minimum of \eqref{equation:GenQCPParameterOptimisation} are found by determining the minimal generalised eigenvalue and associated generalised eigenvector of the $2 \ell$-dimensional problem
\begin{align}
    \bm{H} \bm{x} = \lambda \bm{F} \bm{x}.
\end{align}

We further note that for many CCOPs the feasible subspace $\solspace$ is much smaller dimension than the entire Hilbert space $\hil$.
Therefore, choosing search unitaries at random will result in mostly fully infeasible directions $U_{j} \ket{\iota}$.
However, if the LCU sequence as a whole is capable of extracting a feasible state from intermediate, infeasible states, the method may traverse $\solspace^{\perp}$.
Ultimately, it is desired to transfer knowledge from constructing feasibility-preserving QAOA-mixers to our framework, but also unitaries which trade in strict feasibility-preservation for reduced circuit depth or cheaper gates are well-suited.
One prototypical example is to replace costly controlled-swap gates in permutation-based mixer constructions~\cite{Binkowski2025SymmetryBasedQuantumAlgorithmsForOpenShopSchedulingWithHardConstraints} with easier to implement CNOT gates which still offer valuable state transitions.
Another is to incorporate mixers of exhaustively-parametrised quantum circuits~\cite{Schwiering2026_ExhaustiveAndFeasibleParametrisationWithApplicationsToTheTravellingSalespersonProblem} to lift the implementation probability of individual channels.

\subsection{\label{subsection:RobustnessOfPurityAgainstNoise}Robustness of purity against noise}

The above scheme is formulated for any device that solely operates on pure quantum states, representable by a state vector $\ket{\phi} \in \hil$.
In reality however, we typically operate on \emph{mixed} quantum states, described by the set of density operators $\mathfrak{S} \coloneqq \{ \rho \in \lo(\hil): \rho \succeq 0\ \text{and}\ \tr[\rho] = 1 \}$, where the pure states are exactly the rank-one projections $\rho = \ketbra{\phi}{\phi} \in \mathfrak{S}(\hil)$.
This situation arises when we deploy a quantum computer with access to only imperfect gate operations, e.g., a NISQ device, to solve the optimisation task~\eqref{equation:GenQCPParameterOptimisation}.
This section will prove a certain level of noise resistance to our framework alleviating the effects of noise to the classical parameter optimisation.
Namely, we prove that the suitably reformulated \eqref{equation:GenQCPParameterOptimisation} remains feasible and that its solution remains vector-valued (rather than matrix-valued) even if the pure initial state $\ket{\iota}$ is replaced with a \emph{feasible} mixed state $\rho_{\text{in}} \in \mathfrak{S}(\hil)$, $\supp(\rho) \subseteq \solspace$.
That is, even under the influence of noise in the feasible subspace, it suffices to prepare a pure state in the ancilla register in order to update the main register's mixed state to the optimal solution found in the parameterised ansatz class.

Note that by switching to the density operator formalism, the task of minimising the objective Hamiltonian's expectation value becomes a constrained semidefinite program over the cone of linear, positive-semidefinite operators on the main register's Hilbert space $\PSD(\hil) \supset \mathfrak{S}(\hil)$.
The normalisation condition is now to be expressed in terms of having a unit trace.
Similarly, orthogonality to the infeasible subspace $\solspace^{\perp}$ is expressed by a vanishing trace condition:
\begin{align}\label{equation:CCOPSDP}
    \begin{split}
        &\min_{\rho \in \PSD(\hil)} \tr[\rho C] \\
        &\mathhphantomdisplay{\min_{\rho \in \PSD(\hil)}}{\quad \ \text{s.t.}}\, \tr[\rho] = 1 \\
        &\hphantom{\min_{\rho \in \PSD(\hil)}}\, \tr[\rho (\one - \projection_{\solspace})] = 0.
    \end{split}
\end{align}

The initial state is $\rho_{\text{in}} \in \mathfrak{S}(\hil)$ and can be expressed as a convex combination of pure states; i.e., $\rho_{\text{in}} = \sum_{i} p_{i} \ketbra{i}{i}$.
In the context of a statistical ensemble, $p_{i}$ is the probability of the system to be prepared in the pure state $\ket{i} \in \hil$.
Recording the sample matrices for a given pure state $\ket{i} \in \hil$ is a positive map due to \autoref{lemma:ConstructingSampleMatricesIsCompletelyPositive}.
Since conic combinations of positive maps are again positive and any convex combination is notably also a conic combination, this further ensures that also
\begin{align}
    \begin{split}
        T : \lo(\hil) &\rightarrow \C^{\ell \times \ell}, \\
        A &\mapsto \Big(\tr[\rho_{\text{in}} U_{j}^{\dagger} A U_{k}^{\vphantom{\dagger}}]\Big)_{j, k = 1}^{\ell}
    \end{split}
\end{align}
is a positive map.
Therefore,
\begin{alignat*}{3}
    &\mathbf{F}_{j k} &&\coloneqq T(\one)_{j k} &&= \tr[\rho_{\text{in}} U_{j}^{\dag} U_{k}^{\vphantom{\dagger}}] \\
    &\mathbf{G}_{j k} &&\coloneqq T(\one - \projection_{\solspace})_{j k} &&= \tr[\rho_{\text{in}} U_{j}^{\dag} (\one - \projection_{\solspace}) U_{k}^{\vphantom{\dagger}}], \\
    &\mathbf{H}_{j k} &&\coloneqq T(C)_{j k} &&= \tr[\rho_{\text{in}} U_{j}^{\dag} C U_{k}^{\vphantom{\dagger}}]
\end{alignat*}
are entries of $\ell \times \ell$ positive-semidefinite matrices $\mathbf{F}, \mathbf{G}, \mathbf{H}$.
We have thus transformed the original problem to a potentially smaller SDP formulated over $\PSD_{\ell}$, whose solution gives an upper bound on the original problem:
\begin{equation}\label{equation:CCOPMomMatSDP}
    \begin{alignedat}{3}
        &\min_{\mathbf{X} \in \PSD_\ell} &&\tr[\mathbf{H} \mathbf{X}]
        \\
        & \quad \text{s.t.} &&\,\mathhphantomdisplay{\tr[\mathbf{G} \mathbf{X}]}{\tr[\mathbf{F} \mathbf{X}]} = 1,
        \\
        & &&\tr[\mathbf{G} \mathbf{X}] = 0.
    \end{alignedat}
\end{equation}

As a special case of the complex Barvinok-Pataki rank bound~\cite{Huang2007ComplexMatrixDecompositionAndQuadraticProgramming}, the optimal solution to \eqref{equation:CCOPMomMatSDP} will be attained especially at a rank-one matrix $\mathbf{X}_{\text{opt}}$.
The latter may then be translated into a parameter vector $\bm{\alpha} \in \C^{\ell}$ to reproduce the solution on the quantum device.
More generally, the optimum for \eqref{equation:CCOPSDP} is attained at a pure state.

For the unlikely case that \hyperref[assumption:LinearIndependence]{Assumption} cannot be guaranteed, we conclude with a scenario which does not require $\mathbf{F}$ to be non-singular for bringing \eqref{equation:CCOPMomMatSDP} into the form \eqref{equation:GenQCPParameterOptimisationRealImplicit} by first principles.
Consider a rank-one solution $\mathbf{X} = \bm{x} \bm{x}^{\dag}$, and let $\bm{x} = \bm{x}^{\parallel} + \bm{x}^{\perp}$, where $\bm{x}^{\parallel}$ and $\bm{x}^{\perp}$ are the projections of $\bm{x}$ onto $\ker \mathbf{F}$ and its orthogonal complement, respectively.

\begin{proposition}\label{proposition:SDPtoQCQP}
    If $0 \preceq \mathbf{H} \preceq \mathbf{F}$, \eqref{equation:CCOPMomMatSDP} is equivalent to the quadratic optimisation task \eqref{equation:GenQCPParameterOptimisation} with $\bm{x} \in (\ker \mathbf{F})^{\perp}$.
\end{proposition}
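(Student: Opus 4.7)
The plan is to combine \autoref{lemma:RankOneSolutionSDP} with the order-theoretic implication $H \preceq F \Rightarrow \ker F \subseteq \ker H$ so that the $\ker F$-component of any candidate vector becomes entirely inert in both the objective and the feasibility constraints, after which the stated restriction to $(\ker F)^{\perp}$ is lossless. Concretely, I would proceed in three small stages. First, invoke \autoref{lemma:RankOneSolutionSDP} to reduce \eqref{equation:SDP} to a QCQP: the optimum is attained at a rank-one matrix $X_{\text{opt}} = x x^{\dagger}$, and since $G \succeq 0$, the constraint $\innerp{G}{xx^{\dagger}} = x^{\dagger} G x = 0$ is equivalent to $x \in \ker G$. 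After passing to an orthonormal basis of $\ker G$ this is precisely the setting of \eqref{equation:GenQCPParameterOptimizationRealImplicit}, with the restricted matrices $\tilde{\bm{F}}$ and $\tilde{\bm{H}}$ playing the roles of $F$ and $H$; note that restricting to a subspace preserves the ordering $0 \preceq \tilde{\bm{H}} \preceq \tilde{\bm{F}}$.

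Second, I would establish the key kernel inclusion: assuming $0 \preceq H \preceq F$, any $v \in \ker F$ satisfies $v^{\dagger}(F-H)v \ge 0$ together with $v^{\dagger} F v = 0$, so $v^{\dagger} H v \le 0$; combined with $H \succeq 0$ this forces $v^{\dagger} H v = 0$, and positive-semidefiniteness of $H$ then yields $Hv = 0$. Hence $\ker F \subseteq \ker H$. The orthogonal decomposition $x = x^{\perp} + x^{\parallel}$ with $x^{\parallel} \in \ker F$ therefore annihilates both $F x^{\parallel} = 0$ and $H x^{\parallel} = 0$, so the cross and parallel-parallel contributions in \eqref{equation:RankOneObjectiveDecomp} all vanish. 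We get simultaneously
\begin{equation*}
    x^{\dagger} F x = (x^{\perp})^{\dagger} F x^{\perp} \qquad \text{and} \qquad x^{\dagger} H x = (x^{\perp})^{\dagger} H x^{\perp},
\end{equation*}
which is exactly the phenomenon the discussion preceding the proposition singled out as the only potential obstruction.

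Third, I would wrap up the equivalence. One direction is immediate: adding $\bm{x} \in (\ker F)^{\perp}$ to \eqref{equation:GenQCPParameterOptimizationRealImplicit} shrinks the feasible set and thus upper-bounds the SDP value. For the reverse, take the rank-one optimizer produced in the first stage. Working in the reduced space $\R^m = \ker \bm{G}$ where the ordering $0 \preceq \tilde{\bm{H}} \preceq \tilde{\bm{F}}$ inherits the kernel inclusion $\ker \tilde{\bm{F}} \subseteq \ker \tilde{\bm{H}}$, the decomposition above shows that replacing $x$ by its projection $x^{\perp}$ onto $(\ker \tilde{\bm{F}})^{\perp}$ preserves both the normalization and the objective value. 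This produces a feasible point for the constrained formulation with the same cost, closing the equivalence.

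The conceptual obstacle I anticipate is precisely the one highlighted by the paragraph around \eqref{equation:RankOneObjectiveDecomp}: without the hypothesis $H \preceq F$, the cross term $2(x^{\perp})^{\dagger} H x^{\parallel}$ could be strictly negative and the restriction to $(\ker F)^{\perp}$ would strictly increase the optimum. The hypothesis is therefore used in exactly the minimal way needed to kill that cross term via $\ker F \subseteq \ker H$. A technical nuisance worth flagging is that the $(\ker F)^{\perp}$-projection must be understood inside the reduced space $\ker \bm{G}$, otherwise the $G$-constraint could be violated by a naïve ambient projection; carrying out the decomposition after the basis change $\bm{B}_{\bm{G}}$ already performed in \autoref{subsection:GeneralizedEigenvalueProblem} sidesteps this cleanly.
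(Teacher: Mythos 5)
Your proof is correct and follows essentially the same route as the paper: the kernel inclusion $\ker F \subseteq \ker H$ extracted from $0 \preceq H \preceq F$ via $0 \le v^{\dagger}(F-H)v = -v^{\dagger}Hv \le 0$, followed by the observation that the decomposition \eqref{equation:RankOneObjectiveDecomp} then collapses to $\tr[H x x^{\dagger}] = \tr[H x^{\perp}(x^{\perp})^{\dagger}]$ so the restriction to $(\ker F)^{\perp}$ is lossless. Your added care in performing the projection inside $\ker\bm{G}$ after the basis change $\bm{B}_{\bm{G}}$ is a sensible refinement the paper's terser proof leaves implicit, but it does not change the argument.
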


\begin{figure*}[t]
    \centering
    \begin{tikzpicture}
        \begin{groupplot}[
            group style={
                group size=2 by 1,
                horizontal sep=65pt,
            },
            width=0.48\textwidth,
            height=0.28\textwidth,
            xmin=0, xmax=61,
            xlabel={step},
            label style={font=\small},
            tick label style={font=\footnotesize},
            every axis plot/.append style={line width=0.6pt, opacity=0.8},
            ]
            \nextgroupplot[ylabel={$\braket{\phi | C | \phi}/c_{\text{opt}}$}]
            \foreach \i in {0, ..., 9} {
                \addplot[QCPblue] table[x=step, y=ratio] {data/qcp_qce26_ex\i.dat};
            }
            \nextgroupplot[ylabel={$\prod_{a=1}^{L}\norm*{\bm{\alpha}^{(a)}}_{1}^{-2}$}, ymode=log]
            \foreach \i in {0, ..., 9} {
                \addplot[QCPblue] table[x=step, y=prob] {data/qcp_qce26_ex\i.dat};
            }
        \end{groupplot}
    \end{tikzpicture}
    \caption{Approximation ratio $\braket{\phi | C | \phi}/c_{\text{opt}}$ (left) and aggregated LCU implementation probability $\prod_{a=1}^{L} \norm{\bm{\alpha}^{(a)}}_{1}^{-2}$ (right, log scale) over the course of the optimisation, for ten strongly correlated Pisinger knapsack instances with $n=16$ items each (one line per instance).}
    \label{figure:NumericalExperiments}
\end{figure*}

\begin{proof}
    For any $\bm{x} \in \ker \mathbf{F}$, $0 \preceq \mathbf{H} \preceq \mathbf{F}$ implies
    \begin{equation}
        0 \leq \bm{x}^{\dag} (\mathbf{F} - \mathbf{H}) \bm{x} = - \bm{x}^{\dag} \mathbf{H} \bm{x} \leq 0
    \end{equation}
    and thus, $\ker \mathbf{F} \subseteq \ker \mathbf{H}$.
    It therefore holds that
    \begin{align*}
        \tr[\mathbf{H} \mathbf{X}] &= (\bm{x}^{\perp})^{\dag} \mathbf{H} \bm{x}^{\perp} +(\bm{x}^{\parallel})^{\dag} \mathbf{H} \bm{x}^{\parallel} + 2 (\bm{x}^{\perp})^{\dag} \mathbf{H} \bm{x}^{\parallel} \\
        &= (\bm{x}^{\perp})^{\dag} \mathbf{H} \bm{x}^{\perp},
    \end{align*}
    since $\bm{x}^{\parallel} \in \ker \mathbf{F} \subseteq \ker \mathbf{H}$.
    If $\mathbf{X} = \bm{x} \bm{x}^{\dag}$ is an optimal solution to \eqref{equation:CCOPMomMatSDP}, then so is $\bm{x}^{\perp} (\bm{x}^{\perp})^{\dag}$, whereby $\bm{x}^{\perp} \in (\ker \mathbf{F})^{\perp}$.
\end{proof}

For the SDP \eqref{equation:CCOPSDP}, \autoref{proposition:SDPtoQCQP} suggests to re-scale the objective Hamiltonian such that $C \preceq \one$.
This can be achieved, e.g., by considering any upper-bound on the CCOP's objective function\footnote{
    Trivial upper bounds are readily available even for most NP problems.
    For example, for the Travelling Salesperson Problem, one may sum up the $n$ largest weights,
    and for the knapsack problem, one can sum up all item profits.
} to divide all objective values by.
To understand that this already suffices for \eqref{equation:CCOPMomMatSDP} to satisfy the prerequisites of \autoref{proposition:SDPtoQCQP} as well, we note that $T(\one) - T(C) \succeq 0$ if $\one - C \succeq 0$ since $T$ is linear.

\section{\label{section:NumericalExperiments}Numerical experiments}

As a proof of concept we apply the method to ten strongly correlated Pisinger knapsack instances~\cite{Pisinger2005WhereAreTheHardKnapsackProblems} with $n=16$ items, a class where ratio-greedy rounding is deliberately confounded.
Each item is mapped to one qubit, and the ansatz consists of $L = 16$ LCU channels, each carrying $\ell = 17$ operations: the identity and one Pauli-$Y$ on every qubit.
Starting from the uniform superposition $\ket{+}^{\otimes 16}$, the channel weights are initialised such that the initial output state is exactly the greedy solution and the optimisation is a strict improvement scheme over the classical baseline.
Channels are then updated one at a time by solving the $\ell$-dimensional constrained generalised eigenvalue problem~\eqref{equation:GeneralisedEigenvalueProblem}, sweeping back and forth over the channels for two cycles (61 steps).
The sample matrices are computed exactly using the simulation library Orkan~\cite{Ziegler2026_Orkan}.
\autoref{figure:NumericalExperiments} depicts the approximation ratio $\braket{\phi | C | \phi}/c_{\text{opt}}$ and the aggregated implementation probability of the full LCU sequence at every step.
The mean value never falls below the greedy value, and on the four instances with the weakest greedy solutions the optimiser leaves the classical point and converges to feasible superpositions (infeasible weight below $10^{-17}$) with approximation ratios of $0.98$, gaining up to $0.09$ over greedy; the remaining six instances are fixed points of the update.
These gains are obtained by coherent superpositions of item exchanges that no single-channel classical update can realise.
The implementation probability stays within a factor of six of its warm-start value $2^{-16}$ ($2^{-1}$ for each LCU).
However, note that initially all channels effectively implement single-qubit $R_Y(\pi/2)$ as an LCU operation which any advanced compiler could recognise and replace, thereby elevating the probability to one.
Updates to deterministic operations are marked by initial jumps in the implementation probability.
The subsequent slow decay is due to the spreading of weight over more constituents in the channels.

\section{\label{section:DiscussionAndConclusion}Discussion and Conclusions}

In this article, we have introduced a natural extension of recent proposals for non-unitary parameterised quantum circuits (PQCs) for unconstrained/soft-constrained to general hard-constrained combinatorial optimisation problems.
It aims at a unified framework for addressing arbitrarily-constrained combinatorial optimisation problems already on quantum devices with short to medium coherence times.
However, our approach also finds applicability on far-term devices which may still outsource suitable subroutines to CPUs.
Like its predecessors, our hybrid framework introduces an ansatz class reachable by (non-unitary) linear combinations of unitaries, parameterised by a complex coefficient vector, an ansatz that naturally mitigates the effect of barren plateaus and local minima.
Previous methods formulated a generalised eigenvalue problem (GEP) for the unconstrained case or, more generally, a semidefinite program.
Within this article, we extended the derivation of a GEP for the constrained case and also fill some gaps in prior derivations with mathematical rigour.
This problem transformation requires recording several expectation values and state overlaps, for which we introduce a specialised measurement protocol, solely based on sampling bit strings from the main register and applying the classical objective function and a classical feasibility oracle.
In fact, the objective Hamiltonian does not even have to be implemented on the quantum computer.

Our framework draws inspiration not only from preceding non-unitary methods, but also from the Quantum Alternating Operator Ansatz (QAOA) which first gave a formal treatment of ensuring hard-constraints with PQCs.
One of the QAOA's biggest strengths---its general formalism---is also its major drawback:
There is no universal design pattern for constructing feasibility-preserving and well-mixing PQCs.
With this work, we restore out-of-the-box applicability by integrating the classical objective function and feasibility oracle into the framework;
yet, our approach is universally applicable, sharing this utmost important feature with the QAOA.

With this article, we have laid the foundation for a unified framework and provide a proof of concept via a numerical simulation.
Further theoretical strengthening as well as extensive numerical experiments are crucial.
We will further gauge the performance of our framework on prominent examples such as the Travelling Salesperson Problem.

\IEEEtriggeratref{43}
\bibliographystyle{IEEEtran}
\bibliography{IEEEabrv,bibliography}

\end{document}